\newtheorem{lemma}{Lemma}
\newtheorem{corollary}{Corollary}
\newtheorem{question}{Question}
\newtheorem{remark}{Remark}
\newtheorem{theorem}{Theorem}
\newtheorem{example}{Example}
\def\cst{\mathrm{cst}}
\def\B{\{0,1\}}
\def\In{\mathrm{In}}
\def\e{\epsilon}
\title{Fixed points and connections between positive and negative cycles in Boolean networks}
\author{
Adrien Richard\footnote{Laboratoire I3S, CNRS, Universit\'e C\^ote d'Azur \& Universit\'e Nice Sophia Antipolis, France.
\tt{richard@unice.fr}}
}
\date{November 28, 2016; Revised November 14, 2017}
\begin{document}

\maketitle

\begin{abstract} 
We are interested in the relationships between the number fixed points in a Boolean network $f:\B^ n\to\B^n$ and its interaction graph, which is the arc-signed digraph $G$ on $\{1,\dots,n\}$ that describes the positive and negative influences between the components of the network. A fundamental theorem of Aracena says that if $G$ has no positive (resp. negative) cycle, then $f$ has at most (resp. at least) one fixed point; the sign of a cycle being the product of the signs of its arcs. In this note, we generalize this result by taking into account the influence of connections between positive and negative cycles. In particular, we prove that if every positive (resp. negative) cycle of $G$ has an arc $a$ such that $G\setminus a$ has a non-trivial initial strongly connected component containing the terminal vertex of $a$ and only negative (resp. positive) cycles, then $f$ has at most (resp. at least) one fixed point. This is, up to our knowledge, the first generalization of Aracena's theorem where the conditions are expressed with $G$ only.

%Besides, Aracena proved that if $G$ is strongly connected and has no negative cycle, then $f$ has two fixed points with Hamming distance $n$, and we prove that the same conclusion can be obtained under the following condition: $G$ is strongly connected, has a unique negative cycle $C$, has at least one positive cycle, and $f$ canalizes no arc of $C$. 

\medskip\noindent
{\bf Keywords:} Boolean network, fixed point, interaction graph, positive cycle, negative cycle. 
\end{abstract}

%%%%%%%%%%%%%%%%%%%%%%%%%%%%%%%%%%%%%%%%%%%%%%%%%%%%%%
\section{Introduction}
%%%%%%%%%%%%%%%%%%%%%%%%%%%%%%%%%%%%%%%%%%%%%%%%%%%%%%

A {\em Boolean network} with $n$ components is a discrete dynamical system usually defined by a global transition function
\[
f:\B^n\to\B^n,\qquad x=(x_1,\dots,x_n)\mapsto f(x)=(f_1(x),\dots,f_n(x)).
\]
Boolean networks have many applications. In particular, since the seminal papers of McCulloch and Pitts \cite{MP43}, Hopfield \cite{H82}, Kauffman \cite{K69,K93} and Thomas \cite{T73,TA90}, they are omnipresent in the modeling of neural and gene networks (see \cite{B08,N15} for reviews). They are also essential tools in information theory, for the network coding problem \cite{ANLY00,GRF16}. 

\medskip
The structure of a Boolean network $f$ is usually represented via its {\em interaction graph}, defined below using the following notion of derivative. For every $u,v\in\{1,\dots,n\}$, the {\em discrete derivative} of $f_v$ with respect to the variable $x_u$ is the function $f_{vu}:\B^n\to\{-1,0,1\}$ defined by 
\[
f_{vu}(x):=f_v(x_1,\dots,x_{u-1},1,x_{u+1},\dots,x_n)-f_v(x_1,\dots,x_{u-1},0,x_{u+1},\dots,x_n).
\]
The {\bf interaction graph} of $f$ is the signed digraph $G$ defined as follows: the vertex set is $[n]:=\{1,\dots,n\}$ and, for all $u,v\in[n]$, there is a positive (resp. negative) arc from $u$ to $v$ if $f_{vu}(x)$ is positive (resp. negative) for at least one $x\in\B^n$. Note that $G$ can have both a positive and a negative arc from one vertex to another (in that case, the sign of the interaction depends on the state $x$ of the system). In the following, an arc from $u$ to $v$ of sign $\e\in\{+,-\}$ is denoted $(uv,\e)$. Also, cycles are always directed and regarded as subgraphs (no repetition of vertices is allowed). The sign of a cycle is, as usual, defined as the product of the signs of its arcs.
 
\medskip 
In many contexts, as in molecular biology, the interaction graph is known $-$ or at least well
approximated $-$, while the actual dynamics described by $f$ is not, and is very difficult to observe \cite{TK01,N15}. A natural question is then the following. 

\begin{question}\label{question} 
What can be said on the dynamics of $f$ according to $G$ only? 
\end{question}

Among the many dynamical properties that can be studied, fixed points are of special interest, since they correspond to stable states and often have a strong meaning. For instance, in the context of gene networks, they correspond to stable patterns of gene expression at the basis of particular biological processes \cite{TA90,A04}. As such, they are arguably the property which has been the most thoroughly studied (see \cite{R86} for an introduction to these studies). In particular, many works studied sufficient conditions for the uniqueness or the existence of a fixed point \cite{SD05,A08,RRT08,RC07,R10,R15} (such questions have also been widely studied in the continuous setting, see \cite{KST07} and the references therein). 

\medskip
Here, we are mainly interested in the following two fundamental theorems, suggested by the biologist Thomas \cite{T81}, and known as the Boolean versions of the first and second Thomas' rules.

\begin{theorem}\label{rule1}
If $G$ has no positive cycle, then $f$ has at most one fixed point. More generally, if $f$ has two distinct fixed points $x$ and $y$, then $G$ has a positive cycle $C$ such that $x_v\neq y_v$ for every vertex $v$ in $C$.
\end{theorem}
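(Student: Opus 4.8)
The plan is to prove the stronger (second) statement directly, since it immediately yields the first: if $G$ has no positive cycle, then no two distinct fixed points can exist. So I fix two distinct fixed points $x,y$ and set $D:=\{v\in[n]:x_v\neq y_v\}$, which is nonempty. All the work will take place inside $D$, and the goal is to exhibit a positive cycle of $G$ whose vertices all lie in $D$; since $x_v\neq y_v$ for every $v\in D$, such a cycle is exactly what the theorem asks for.

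First I would attach to each $v\in D$ a polarity recording the direction of the discrepancy, namely $\sigma_v:=x_v-y_v\in\{+1,-1\}$ (identifying $+$ with $+1$ and $-$ with $-1$ throughout). The key algebraic step is to telescope the difference $f_v(x)-f_v(y)$ along a path of single coordinate flips from $y$ to $x$ that touches only coordinates of $D$: writing $D=\{u_1,\dots,u_k\}$ and letting $z^0=y,z^1,\dots,z^k=x$ with $z^i$ agreeing with $x$ on $u_1,\dots,u_i$ and with $y$ elsewhere, one gets $f_v(x)-f_v(y)=\sum_{i}\sigma_{u_i}\,f_{vu_i}(z^{i-1})$, because flipping the coordinate $u_i$ in the direction dictated by $\sigma_{u_i}$ produces exactly $\sigma_{u_i}$ times the discrete derivative $f_{vu_i}$. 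Since $x$ and $y$ are fixed points, the left-hand side equals $x_v-y_v=\sigma_v\neq 0$.

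Second I would use this nonzero sum to select, for each $v\in D$, a good in-arc. Each nonzero term $\sigma_{u_i}f_{vu_i}(z^{i-1})$ equals $\pm 1$ and is a witness that $G$ has an arc $(u_iv,\e)$ with $\e=\mathrm{sign}\,f_{vu_i}(z^{i-1})$. Because the terms sum to $\sigma_v\neq 0$, at least one of them equals $\sigma_v$ (if every nonzero term had the opposite sign, the sum could not equal $\sigma_v$); for such an index the associated arc sign satisfies $\e=\sigma_{u_i}\sigma_v$. Thus for every $v\in D$ there is an in-neighbour $u\in D$ and an arc $(uv,\e)$ of $G$ with $\e=\sigma_u\sigma_v$. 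Assembling these chosen arcs gives a sub-digraph $H$ of $G$ on vertex set $D$ in which every vertex has in-degree at least one; as $D$ is finite and nonempty, following in-arcs backwards must close up into a cycle $C$ of $H$, all of whose vertices lie in $D$. The payoff of the polarity bookkeeping is that the sign of $C$ telescopes: multiplying the arc signs $\sigma_u\sigma_v$ around the cycle, each vertex polarity appears exactly twice (once as a head, once as a tail), so every $\sigma$ is squared and the product is $+$. Hence $C$ is a positive cycle with $x_v\neq y_v$ at each of its vertices, as required.

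I expect the main obstacle to be the sign bookkeeping in the middle step. One must set up the polarities so that the selected in-arc has sign exactly $\sigma_u\sigma_v$, and this forces controlling the possible cancellations in the telescoped sum, which is precisely why the argument extracts a single term of the correct sign rather than reasoning about the whole sum. Getting these signs mutually consistent so that the cyclic product collapses to $+$ is the crux; once that is in place, the remainder is the standard ``every vertex has an in-neighbour inside $D$, hence there is a cycle'' argument.
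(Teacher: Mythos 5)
Your proof is correct, and it lands on the same combinatorial skeleton as the paper's while reaching it by a noticeably different technical route. The paper first proves a monotonicity lemma (Lemma~\ref{local_lemma}): $f_v$ is non-decreasing for a vertex-dependent partial order $\leq_v$ built from the signs of the in-neighbors of $v$. It then introduces the auxiliary graph $G(x)$, whose arcs are exactly those $(uv,\e)$ with $\e=+$ and $x_u=x_v$, or $\e=-$ and $x_u\neq x_v$, and argues by contradiction (Lemma~\ref{alphabeta_lemma}) that the disagreement set induces no source in $G(x)$: a source $v$ would give $x\leq_v y$ (up to symmetry), hence $f_v(x)\leq f_v(y)$, contradicting $f_v(x)=x_v\neq y_v=f_v(y)$; a cycle then exists in $G(x)[I]$ and is positive because every cycle of $G(x)$ is. Your argument replaces both devices: the telescoping identity $f_v(x)-f_v(y)=\sum_i\sigma_{u_i}f_{vu_i}(z^{i-1})$ plays the role of the monotonicity lemma (whose proof in the paper is itself a coordinate-flipping induction, so you have essentially inlined it), and the arc you extract, of sign $\e=\sigma_u\sigma_v$ with $u,v\in D$, is precisely an arc of the paper's $G(x)$ between disagreement vertices, since for $u,v\in D$ one has $\sigma_u\sigma_v=+1$ if and only if $x_u=x_v$. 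Likewise your $\sigma^2$-cancellation around the closed-up cycle is the explicit form of the paper's observation that all cycles of $G(x)$ are positive. What your version buys is self-containedness and directness: a single argument straight from the definition of the discrete derivative, extracting a correctly-signed term from a sum that must be nonzero, with no auxiliary order or auxiliary graph. What the paper's packaging buys is reuse: $\leq_v$, $G(x)$ and the no-source argument are exactly the tools redeployed in the stronger results (Theorem~\ref{uniqueness} via Lemma~\ref{canalizing}, and the existence theorems via Lemmas~\ref{lem_ex_1} and~\ref{lem_ex_2}), so setting them up once pays off throughout the paper.
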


\begin{theorem}\label{rule2}
If $G$ has no negative cycle, then $f$ has at least one fixed point.
\end{theorem}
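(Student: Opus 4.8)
The plan is to prove the statement by induction on $n$, reducing the general case to the case where $G$ is strongly connected, and then settling that case by a sign-switching change of variables that turns $f$ into a monotone map, for which the existence of a fixed point is classical. The engine of the strongly connected case is the conjugation $g(x):=f(x\oplus s)\oplus s$ by a vector $s\in\B^n$, where $\oplus$ is the coordinatewise XOR. One checks immediately that $x\mapsto x\oplus s$ is a bijection between the fixed points of $g$ and those of $f$, so it suffices to find a fixed point of $g$ for a well-chosen $s$. A short computation with discrete derivatives gives $g_{vu}(x)=(-1)^{s_u+s_v}f_{vu}(x\oplus s)$, which means the interaction graph of $g$ is obtained from $G$ by the switching that multiplies the sign of each arc $(uv,\e)$ by $(-1)^{s_u+s_v}$; in particular this operation preserves the sign of every cycle.

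For the reduction to the strongly connected case, I would look at the decomposition of $G$ into strongly connected components and pick a source component $S$ (one with no incoming arc from outside); such a component exists since the condensation of $G$ is acyclic. Because no arc enters $S$, each $f_v$ with $v\in S$ depends only on the coordinates in $S$, so $f$ restricts to a well-defined subnetwork $f^S:\B^S\to\B^S$ whose interaction graph is the induced subgraph $G[S]$, which is strongly connected and has no negative cycle. If $S\neq[n]$, induction gives a fixed point $x^*_S$ of $f^S$; substituting $x_S=x^*_S$ yields a network on $[n]\setminus S$ whose interaction graph is a subgraph of $G$ (fixing coordinates can only remove dependencies) and hence still has no negative cycle, so induction again provides a fixed point of the reduced network, and the two pieces assemble into a fixed point of $f$. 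Thus it remains to treat $S=[n]$, i.e. $G$ strongly connected.

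In the strongly connected case I would choose $s$ so that every arc of $g$ becomes positive, for then every discrete derivative of $g$ is nonnegative, $g$ is monotone, and Knaster--Tarski on the complete lattice $\B^n$ yields a fixed point. To find such an $s$, I would use the hypothesis as follows. Since $G$ has no negative cycle, every closed walk of $G$ is positive: its arc multiset decomposes into simple directed cycles, each of which is positive by assumption, and the sign is multiplicative over this decomposition. By strong connectivity, this forces any two walks from $u$ to $v$ to have the same sign; fixing a root $r$ and letting $\sigma_v$ be the common sign of the $r$-to-$v$ walks (with $\sigma_r=+$) gives a well-defined labeling $\sigma:[n]\to\{+,-\}$ satisfying $\sigma_u\,\e\,\sigma_v=+$ for every arc $(uv,\e)$. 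Taking $s_v=0$ when $\sigma_v=+$ and $s_v=1$ when $\sigma_v=-$ realizes exactly this switching, so $g$ has only positive arcs. (The same consistency incidentally forbids opposite parallel arcs $(uv,+)$ and $(uv,-)$ inside a strongly connected graph with no negative cycle, which is what makes monotonicity attainable.)

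I expect the main obstacle to be this last combinatorial step: making the definition of $\sigma$ rigorous requires the passage from ``no negative cycle'' to ``no negative closed walk'', together with a clean argument that a closed directed walk decomposes into simple cycles with multiplicative signs. Once that labeling is in hand, the conjugation identity and the Knaster--Tarski conclusion are routine, and the source-component induction needs only the elementary observation that fixing coordinates cannot create new arcs.
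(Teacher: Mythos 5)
Your proof is correct, but it reaches the conclusion by a genuinely different route than the paper. The paper treats this theorem as a citation to Aracena; the argument it sketches (in a remark of Section~\ref{sec:existence}) runs as follows in the strongly connected case: by the Cartwright--Harary theorem (Theorem~\ref{thm_Harary}) together with Lemma~\ref{lem_GG*}, $G$ admits a two-coloring $x$, i.e.\ a point with $G(x)=G$, and Lemma~\ref{lem_ex_1} then shows that this very point (and its complement $\overline{x}$) is a fixed point, provided $G$ has no source. Your labeling $\sigma$ is exactly that two-coloring --- your closed-walk argument amounts to a proof of the Cartwright--Harary theorem for strong digraphs --- but you exploit it differently: rather than plugging the coloring into $f$, you conjugate $f$ by the switching $x\mapsto x\oplus s$ to obtain a network whose arcs are all positive, and then invoke Knaster--Tarski; the coordinatewise monotonicity this requires is the paper's Lemma~\ref{local_lemma} specialized to all-positive signs, so the two dynamical arguments share that kernel. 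Your route buys a clean separation between combinatorics and dynamics and avoids the no-source caveat, at the price of an external fixed-point theorem. What the paper's route buys is a stronger conclusion that it actually needs later (Theorem~\ref{strong} and the remarks around it): in the strong case both the coloring $x$ and $\overline{x}$ are fixed, giving two fixed points at Hamming distance $n$, whereas Knaster--Tarski only yields existence of one. Finally, your reduction along an initial strong component (fixing its coordinates can only delete arcs, so no negative cycle can appear in the reduced network) is sound, and is essentially the same substitution device the paper uses in the proofs of Corollary~\ref{newupperbound} and Theorem~\ref{newrule2}.
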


The first theorem has been proved by Aracena, see the proof of \cite[Theorem 9]{A08} and also \cite{ADG04a}, and a stronger version has been independently proved by Remy, Ruet and Thieffry \cite[Theorem 3.2]{RRT08}. The second theorem is an easy application of another result of Aracena \cite[Theorem 6]{A08}.% and has been proved independently in \cite[Corollary 1]{R10}.

\medskip
Two upper bounds on the number of fixed points can be deduced from Theorem~\ref{rule1}. Let~$\tau^+$ be the minimal number of vertices whose deletion in $G$ leaves a signed digraph without positive cycle. From Theorem~\ref{rule1} we deduce, using arguments reproduced below in the proof of Corollary~\ref{newupperbound}, that {\em $f$ has at most $2^{\tau^+}$ fixed points} \cite[Theorem 9]{A08}. For the second bound, two additional definitions are needed. Let $g^+$ be the minimum length of a positive cycle of $G$ (with the convention that $g^+=\infty$ if $G$ has no positive cycle), and for every integer $d$, let $A(n,d)$ be the maximal size of a subset $X\subseteq\B^n$ such that the Hamming distance between any two distinct elements of $X$ is at least $d$. According to Theorem~\ref{rule1}, the Hamming distance between any two distinct fixed points of $f$ is at least $g^+$, and thus we get a second upper bound: {\em $f$ has at most $A(n,g^+)$ fixed points}. The quantity $A(n,d)$, usually called {\em maximal size of a binary code of length $n$ with minimal distance $d$}, has been intensively studied in Coding Theory. The well known Gilbert bound and sphere packing bound give the following approximation: $2^{n}/\sum_{k=0}^{d-1}{n \choose k}\leq A(n,d)\leq 2^{n}/\sum_{k=0}^D{n \choose k}$ with $D={\lfloor\frac{d-1}{2}\rfloor}$. See \cite{GRR15} for other connections with Coding Theory. 

%\cite{R86,ADG04,SD05,RRT08,A08,R09,ARS14,GRR14,R15}

\medskip
All the generalizations of the previous results known so far use additional information on $f$ \cite{RRT08,R10} (or consist in enlarging the framework, considering discrete networks instead of Boolean networks and asynchronous attractors instead of fixed points \cite{RC07,R10}). In this note, we establish, up to our knowledge, the first generalizations that only use information on $G$, and which thus contribute directly to Question~\ref{question} (these are the Theorems~\ref{newrule1}, \ref{renewrule1} and \ref{newrule2} stated below).

\medskip
Our approach is the following. The previous results show that positive and negative cycles are key structures to understand the relationships between $G$ and the fixed points of $f$. However, they use information on positive cycles only, or on negative cycles only. It is then natural to think that improvements could be obtained by considering the two kinds of cycles simultaneously. This is what we do here, by highlighting two qualitative phenomena on the influence of connections between positive and negative cycles. These two dual phenomena could be verbally described as follows (we say that two graphs {\em intersect} if they share a common vertex): 
\begin{enumerate}
\item
{\em If each positive cycle $C$ of $G$ intersects a negative cycle $C'$, and if $C$  ``isolates'' $C'$ from the other positive cycles, then $f$ behaves as in the absence of positive cycles: it has at most  one fixed point.} 
\item
{\em If each negative cycle $C$ of $G$ intersects a positive cycle $C'$, and if $C$  ``isolates'' $C'$ from the other negative cycles, then $f$ behaves as in the absence of negative cycles: it has at least one fixed point.}
\end{enumerate}

The following three theorems give a support to these phenomena. Theorems~\ref{newrule1} and \ref{renewrule1} are uniqueness results that generalize (the first assertion in) Theorem~\ref{rule1}. Theorem~\ref{newrule2} is an existence result that generalizes Theorem~\ref{rule2} and shows, together with Theorem~\ref{newrule1}, an explicit duality between positive and negative cycles (all the notions involved in these statements are formally defined in the next section). 

\begin{theorem}\label{newrule1}
If every positive cycle of $G$ has an arc $a=(uv,\epsilon)$ such that $G\setminus a$ has a non-trivial initial strong component containing $v$ and only negative cycles, then $f$ has at most one~fixed~point.
\end{theorem}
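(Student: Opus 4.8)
The plan is to argue by contradiction, using the strong form of Theorem~\ref{rule1} to produce a convenient positive cycle and then to ``cut'' it at the arc supplied by the hypothesis. So suppose $f$ has two distinct fixed points $x\neq y$. By the second assertion of Theorem~\ref{rule1}, $G$ has a positive cycle $C$ with $x_w\neq y_w$ for every vertex $w$ of $C$. Apply the hypothesis to this $C$: it yields an arc $a=(uv,\epsilon)$ \emph{of $C$} such that $G\setminus a$ has a non-trivial initial strong component $S$ with $v\in S$ and only negative cycles. Since $a$ is an arc of $C$, both $u$ and $v$ lie on $C$, and therefore $x_u\neq y_u$ and $x_v\neq y_v$; this observation that the two fixed points necessarily \emph{disagree} at the endpoints of $a$ will be the source of all the tension in the argument.

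Next I would pass to the initial component $S$ by freezing the influence of $a$. For $b\in\B$ define $h^b:\B^n\to\B^n$ by $h^b_i=f_i$ for $i\neq v$ and $h^b_v(z)=f_v(z_1,\dots,z_{u-1},b,z_{u+1},\dots,z_n)$; then the interaction graph of $h^b$ is contained in $G\setminus a$. Because $S$ is initial in $G\setminus a$, it is initial in each $h^b$, so the restriction $h^b|_S$ is an autonomous network on $S$ whose interaction graph, lying inside $(G\setminus a)[S]$, has only negative cycles; by the first assertion of Theorem~\ref{rule1} it has \emph{at most one} fixed point. Now $f_v(x)=f_v(x;\,u\leftarrow x_u)$ shows that $x$ is a fixed point of $h^{x_u}$, hence $x|_S$ is \emph{the} fixed point of $h^{x_u}|_S$, and symmetrically $y|_S$ is the unique fixed point of $h^{y_u}|_S$. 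The whole difficulty is now concentrated in the fact that $h^{x_u}|_S$ and $h^{y_u}|_S$ differ only in their $v$-component while $x_u\neq y_u$: the two global fixed points are witnessed by the \emph{two different} frozen values of the single incoming arc, and $\mathrm{FP}(f)=\{z:z\in\mathrm{FP}(h^{z_u})\}$ records exactly the bistability of the feedback loop $C=a\cdot P$, where $P=C\setminus a$ is a $v\leadsto u$ path lying in $G\setminus a$.

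The crux — and the step I expect to be the main obstacle — is to rule out this bistability. It is tempting to reroute $a$ inside $S$ (replacing the input $b$ by $z_w$ for some $w\in S$ with $x_w\neq y_w$) so as to make $x|_S,y|_S$ fixed points of a \emph{single} network on $S$ and then invoke Theorem~\ref{rule1}; however, a sign computation along $C$ shows that the rerouted arc faithfully reconstitutes a positive cycle through the new arc, so this purely local surgery is self-consistent and yields \emph{no} contradiction. The obstruction is genuinely global: the loop gain around $C$ is positive (that is what makes $C$ a positive cycle), and the condition ``$S$ has only negative cycles'' must instead be exploited through the \emph{initial} (upstream) position of $S$ in $G\setminus a$. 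Accordingly, the plan is to set up an induction on the condensation of $G\setminus a$: peel off $S$ first — it contributes a \emph{unique} configuration $\xi^b$ to any fixed point of $h^b$ — and recurse on the downstream network $h^{b,\xi^b}$ obtained by fixing the $S$-variables, tracking the consistency constraint $\zeta_u=b$ that selects the actual fixed points of $f$. The heart of the matter, which I would attack last, is to show that the \emph{full} hypothesis (that \emph{every} positive cycle of $G$ admits such an arc and component) is inherited by the downstream networks, so that the induction closes and forces $\mathrm{FP}(f)$ to be determined independently of $b$; equivalently, one wants the sign structure propagating from $S$ to the downstream vertex $u$ to force $x_u=y_u$, contradicting $x_u\neq y_u$. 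Controlling the downstream positive cycles — whose own associated strong components may a priori thread back through $S$ — is where I expect the real work, and the global quantification over all positive cycles to be indispensable.
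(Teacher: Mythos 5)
There is a genuine gap: your argument is sound as far as it goes, but it never reaches a contradiction. All the steps you actually carry out check out (the cycle $C$ from Theorem~\ref{rule1}, the frozen networks $h^{b}$, the fact that $S$ is autonomous in each $h^b$, and that $x_S$, $y_S$ are the unique fixed points of $h^{x_u}|_S$ and $h^{y_u}|_S$ respectively). But nothing in what you establish rules out the coexistence of these two fixed points, and the decisive step is explicitly deferred: the induction on the condensation of $G\setminus a$ and, above all, the claim that the hypothesis of the theorem is inherited by the downstream networks. That deferred step is not a routine verification. The downstream network's interaction graph is only a subgraph of $G$ with the $S$-coordinates frozen; a positive cycle of that subgraph is a positive cycle of $G$, but its witnessing arc $a'$ and initial component $I'$ (taken in $G\setminus a'$) may meet $S$ or previously removed vertices, so the hypothesis does not obviously transfer. (The paper does run an induction of exactly this shape, with a nontrivial maximality argument to make the inheritance work, but for the \emph{dual existence} result, Theorem~\ref{newrule2}; for uniqueness it avoids induction entirely.)

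The idea your setup is missing is to retain the \emph{coherence} of the cycle with the fixed points. The paper does not use Theorem~\ref{rule1} as a black box: by Lemma~\ref{alphabeta_lemma}, the cycle $C$ can be taken inside $G(x)[I]$, where $I=\{w: x_w\neq y_w\}$, and one has $G(x)[I]=G(y)[I]$, so $C$ is coherent with \emph{both} fixed points. For such a $C$, the hypothesis on $a=(uv,\epsilon)$ (which implies that $a$ is a special arc of $C$) forces $v$ to be a source of $G(x)\setminus a$, and then monotonicity of $f_v$ with respect to $\leq_v$ (Lemma~\ref{local_lemma}) yields the canalization of Lemma~\ref{canalizing}: $f_v(z)=x_v$ for every $z\in\B^n$ with $z_u=x_u$, and symmetrically $f_v(z)=y_v$ for every $z$ with $z_u=y_u$. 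Since $x_u\neq y_u$ and $x_v\neq y_v$, these two facts force $f_v$ to depend on $z_u$ only, so $a$ is the unique in-coming arc of $v$ in $G$, making $v$ a source of $G\setminus a$ --- contradicting that $v$ lies in a non-trivial (strong) initial component of $G\setminus a$. Note that this canalization is a pointwise constraint on the local function $f_v$, strictly stronger than your statement that $x_S$ is the unique fixed point of $h^{x_u}|_S$; it is precisely the lever that turns the graph-theoretic hypothesis into a contradiction, and it is unavailable in your argument because your $C$ is only known to be a positive cycle on which $x$ and $y$ disagree, not a cycle of $G(x)$.
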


\begin{theorem}\label{renewrule1}
If every positive cycle $C$ of $G$ has a vertex $v$ of in-degree at least two that belongs to no other positive cycle and with only in-neighbors in $C$, then $f$ has at most one fixed~point. 
\end{theorem}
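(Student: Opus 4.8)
The plan is to argue by contradiction, assuming $f$ has two distinct fixed points $x$ and $y$, and to reduce everything to a single carefully chosen positive cycle. First I would invoke Theorem~\ref{rule1} to obtain a positive cycle $C=c_0c_1\cdots c_{k-1}c_0$ on which $x$ and $y$ disagree everywhere; more precisely I would use the \emph{coherent} cycle produced by the discrete-derivative (telescoping) argument underlying that theorem, so that every arc $c_ic_{i+1}$ of $C$ has sign $\delta_{c_i}\delta_{c_{i+1}}$, where $\delta_w:=\mathrm{sign}(x_w-y_w)$ for $w$ in the disagreement set $D=\{w:x_w\neq y_w\}$; I abbreviate $\delta_i:=\delta_{c_i}$. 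Applying the hypothesis to $C$ gives the special vertex, which I relabel as $v=c_0$: its in-degree is at least two, all its in-neighbors lie on $C$, and it lies on no positive cycle other than $C$.

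The first key step is to pin down the coherent in-arcs of $v$. For an in-neighbor $c_j$ with $j\neq k-1$, the coherent $C$-path $c_0\to\cdots\to c_j$ has sign $\prod_{i=0}^{j-1}\delta_i\delta_{i+1}=\delta_0\delta_j$, so if the arc $c_jc_0$ carried the coherent sign $\delta_0\delta_j$ it would close a cycle $c_0\to\cdots\to c_j\to c_0$ of sign $(\delta_0\delta_j)(\delta_0\delta_j)=+$ through $v$, strictly shorter than $C$ and hence distinct from it, contradicting that $v$ lies on no other positive cycle. Thus every coherent arc into $v$ comes from the predecessor $c_{k-1}$; equivalently, each of the (at least one) further in-neighbors $c_j$ supplied by $\mathrm{indeg}(v)\ge 2$ sends only the incoherent sign $-\delta_0\delta_j$ into $v$.

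The second, decisive step studies $\phi:=f_v$ as a function of its essential variables, i.e.\ the set $N$ of in-neighbors of $v$, which satisfies $N\subseteq C$. Since $x$ and $y$ disagree on all of $C$, their restrictions to $N$ are antipodal, and $\phi(x|_N)=x_v\neq y_v=\phi(y|_N)$ because $x,y$ are fixed points; normalize so that $\delta_0=+$, i.e.\ $\phi(x|_N)=1$ and $\phi(y|_N)=0$. I would then consider the flip-paths from $y|_N$ to $x|_N$ that change one coordinate at a time toward its $x$-value; because the endpoints are antipodal, these paths sweep out the \emph{entire} subcube $\B^N$. Along any such path the net change of $\phi$ is $+1$, so $(\#\text{up-steps})-(\#\text{down-steps})=1$, and an up-step flipping a coordinate $c$ exhibits a coherent arc $c\to v$, which by the first step forces $c=c_{k-1}$. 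As $c_{k-1}$ is flipped exactly once per path, there is at most one up-step, whence every flip-path has exactly one up-step and \emph{no} down-step. This is the contradiction: the incoherent in-neighbor $c_j$ realizes $f_{vc_j}(w)=-\delta_0\delta_j$ at some $w\in\B^n$, so flipping $c_j$ toward its $x$-value at the corresponding configuration is a down-step, and by the sweeping property some flip-path from $y|_N$ to $x|_N$ passes through exactly this step, producing a down-step where none may occur.

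I expect the main obstacle to be the possible \emph{non-unateness} of $f_v$: a vertex may receive both a positive and a negative arc, so $f_v$ need not be monotone in any in-neighbor. A naive argument that reorients $f_v$ to be monotone and reads off a dictator on $c_{k-1}$ only works in the unate case. The flip-path counting above is designed precisely to bypass this: it never assumes monotonicity, using instead that antipodal endpoints make the flip-paths cover all of $\B^N$, so the single forced ``bad'' value of the derivative $f_{vc_j}$ is necessarily witnessed on some admissible path. A secondary point to secure is the clean construction of the coherent positive cycle refining Theorem~\ref{rule1} (so that the path $c_0\to\cdots\to c_j$ is genuinely coherent), which is exactly the discrete-derivative telescoping already underlying that theorem.
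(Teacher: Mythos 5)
Your proposal is correct, but it proves the theorem by a partly different route than the paper, and the difference is instructive. The paper never argues Theorem~\ref{renewrule1} directly: it proves the more general Theorem~\ref{uniqueness} (if $x\neq y$ are fixed points, then some positive cycle on their disagreement set has no \emph{special arc}) and observes that your hypothesis makes the arc of $C$ entering $v$ a special arc. Your first step coincides with the specialization of the paper's Lemma~\ref{canalizing} to this situation: a coherent chord $c_j\to v$ together with the coherent path $C[v,c_j]$ would close a second positive cycle through $v$, which is forbidden, so the only possible coherent in-coming arc of $v$ is the arc of $C$ itself. Where you genuinely diverge is the endgame. The paper applies monotonicity of $f_v$ with respect to the partial order $\leq_v$ (Lemma~\ref{local_lemma}) at both fixed points to conclude $f_v(z)=x_v$ if and only if $z_u=x_u$ (with $u=c_{k-1}$), i.e. $f_v$ has a single essential variable, contradicting in-degree $\geq 2$; your flip-path counting on $\B^N$ replaces this order-theoretic step by a purely combinatorial one. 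Note that your worry about non-unateness is not actually an obstacle for the paper's route: the order $\leq_v$ forces equality on $\In^+(v)\cap\In^-(v)$, so Lemma~\ref{local_lemma} holds for non-unate $f_v$ as well. Conversely, your counting leans crucially on the antipodality of $x_N$ and $y_N$, i.e. on the hypothesis $\In(v)\subseteq V(C)$; unlike the paper's lemma it therefore does not extend to special arcs, and so it yields neither Theorem~\ref{newrule1} nor the upper bounds of Section~\ref{sec:upperbound}, which the paper gets for free from Theorem~\ref{uniqueness}. One small point to tighten: the paper defines in-degree as the number of in-coming \emph{arcs}, and $G$ may carry both a positive and a negative arc from the same vertex, so in-degree $\geq 2$ does not immediately supply a second in-neighbor $c_j\neq c_{k-1}$ as your text assumes. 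The gap is harmless: either observe that a local function with a single essential variable cannot have arcs of both signs (so two distinct in-neighbors do exist), or run your down-step argument verbatim on the incoherent parallel arc from $c_{k-1}$, since at most one in-coming arc of $v$ can be coherent.
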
 

\begin{theorem}\label{newrule2}
If every negative cycle of $G$ has an arc $a=(uv,\epsilon)$ such that $G\setminus a$ has a non-trivial initial strong component containing $v$ and only positive cycles, then $f$ has at least one fixed point.
\end{theorem}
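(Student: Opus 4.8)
The plan is to prove Theorem~\ref{newrule2} as a dual of Theorem~\ref{newrule1}, exploiting the formal symmetry between positive and negative cycles under the standard involution that conjugates $f$ by a coordinate-flip. Concretely, I would first establish Theorem~\ref{newrule1} (uniqueness under the positive-cycle hypothesis) and then transport the existence statement of Theorem~\ref{rule2} across the hypothesis of Theorem~\ref{newrule2} by a reduction that peels off cycles one arc at a time. The guiding idea is that the arc $a=(uv,\e)$ promised by the hypothesis, together with the non-trivial initial strong component $S$ containing $v$ in $G\setminus a$ whose cycles are all positive, lets me ``freeze'' or ``resolve'' the vertex $v$ in a controlled way so that removing $a$ either destroys all negative cycles or reduces to a strictly smaller network still satisfying the hypothesis.

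The key steps, in order, would be the following. \emph{Step 1.} Fix an arbitrary negative cycle $C$ of $G$ and the arc $a=(uv,\e)$ it provides, with $S$ the non-trivial initial strong component of $G\setminus a$ containing $v$ and having only positive cycles. Because $S$ is initial in $G\setminus a$, no arc of $G\setminus a$ enters $S$ from outside; the only arc entering $S$ from outside in $G$ itself is $a$ (any other such arc would survive in $G\setminus a$ and contradict initiality). \emph{Step 2.} Argue that every negative cycle of $G$ must use the arc $a$: a negative cycle avoiding $a$ lives in $G\setminus a$, and if it meets $S$ it must lie entirely inside $S$ (again by initiality), hence is positive, a contradiction; if it avoids $S$ it is a negative cycle of $G\setminus a$ to which one recursively applies the hypothesis on the smaller digraph. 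This recursion is the engine of the proof and should terminate because each step removes at least one arc. \emph{Step 3.} Use this structural control to build a fixed point: restrict attention to $S$, which has no negative cycle (all its cycles are positive) and is an initial component, so the sub-network it induces has a fixed point by Theorem~\ref{rule2}; then extend this partial fixed point outward along the acyclic-in-sign structure, propagating through the frozen value of $v$ and the arc $a$, and invoke an induction on $n$ (or on the number of strong components) to complete the assignment on the remaining vertices, each time preserving the Theorem~\ref{newrule2} hypothesis on the residual digraph.

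I expect the main obstacle to be Step~3: making precise the sense in which a fixed point of the sub-network induced on the initial component $S$ \emph{extends} to a fixed point of all of $f$. The subtlety is that $f_v$ for $v\in S$ may depend on coordinates outside $S$ only through the single arc $a$ (by initiality), so one must show that after fixing the incoming value along $a$ the induced local function on $S$ still has the interaction graph $G[S]$ (or a subgraph of it with only positive cycles), allowing Theorem~\ref{rule2} to apply to the restriction. Formalizing the ``restriction of $f$ to a module/initial component'' and checking that the interaction graph of the restriction is exactly $G[S]$ requires a careful lemma relating the discrete derivatives $f_{vu}$ of $f$ to those of its restriction. I would isolate this as a preliminary lemma on initial strong components and modular restrictions, since it is the crux that turns the purely graph-theoretic hypothesis into a statement about the existence of fixed points via Theorem~\ref{rule2}. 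The duality with Theorem~\ref{newrule1} should then make the bookkeeping symmetric, but I anticipate that the existence direction is genuinely harder than uniqueness, because existence is not inherited by subgraphs in the naive way that the difference of two fixed points localizes to a single cycle in the uniqueness arguments.
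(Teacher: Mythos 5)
Your proposal has the right coarse shape (an induction that freezes the initial component $S$ and exploits the fact that all of its cycles are positive), but it is missing the two ideas that make the theorem true, and both of the places where you defer the difficulty are exactly where the argument breaks. First, Step~2 is wrong as stated: it is false that every negative cycle of $G$ must use $a$ --- a negative cycle can avoid both $a$ and $S$ --- and your recursive patch (``apply the hypothesis to the smaller digraph'') is not well founded, because the hypothesis does not obviously pass to $G\setminus a$: for another negative cycle $C'$ with its arc $a'$, the promised component $I'$ is initial in $G\setminus a'$, not in $(G\setminus a)\setminus a'$. The paper instead inducts on the number of negative cycles, passes to the graph $G^I$ (delete \emph{all} arcs with terminal vertex in $I$, not just $a$), and crucially chooses $I$ \emph{maximal} among the components supplied by the hypothesis; maximality is what forces $I\cap I'=\emptyset$ and hence lets one check that $G^I$ still satisfies the hypothesis, so that the induction has an invariant to recurse on.

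Second, and more fundamentally, Step~3 cannot work with a single candidate point. Since $a$ lies on the negative cycle $C$, the vertex $u$ is reachable from $v$ in $G\setminus a$; so the values you propagate ``outward'' from $S$ feed back into $S$ through $a$, and there is no acyclic structure to extend along: a fixed point of the restriction of $f$ to $S$ with the input along $a$ frozen need not extend to a fixed point of $f$, precisely because the value returned at $u$ can contradict the frozen input. (Note also that $u$ may itself belong to $S$, in which case ``fixing the incoming value along $a$'' is not even a well-defined operation on the restriction.) The paper's resolution is a dichotomy rather than an extension: take a two-coloring $\chi$ of $(G[I]\setminus a)^*$ (Theorem~\ref{thm_Harary} together with Lemma~\ref{lem_GG*}), run the induction twice --- once with $I$ frozen at $\chi$, once at $\overline{\chi}$ --- to get two candidates $x$ and $y$ that are fixed at every coordinate $w\neq v$ (Lemma~\ref{lem_ex_1}), and then show that if \emph{both} fail at $v$, Lemma~\ref{lem_ex_2} forces $f_v$ to depend on the coordinate $u$ alone, so $a$ is the unique in-coming arc of $v$ in $G$, contradicting the non-triviality of the strong component $I\ni v$. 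This complementation trick works because $\chi$ and $\overline{\chi}$ are both two-colorings; it is not something you can get from Theorem~\ref{rule2} applied to a restricted network, whose fixed point carries no usable complement. That dichotomy is the crux of the proof, and it is absent from your proposal.
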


\begin{example}\label{ex1}
{\em Suppose that $G$ is the signed digraph described below. Each positive cycle $C$ is of length three and contains a vertex $v\in\{3,5,\dots,n\}$ with a negative loop. The arc $(v-1)\to v$ of $C$ then satisfies the condition of Theorem \ref{newrule1}, and the vertex $v$ satisfies the condition of Theorem~\ref{renewrule1}. Thus $f$ has at most one fixed point.
\begin{center}
\begin{tikzpicture}
\node (1) at (0,1){1};
\node (2) at (2,1){2};
\node (3) at (1,0){3};
\node (4) at (4,1){4};
\node (5) at (3,0){5};
\node (6) at (6,1){6};
\node (7) at (5,0){7};
\node (...) at (7,1){$\cdots\quad$};
\node (n3) at (8,1){$n-3$};
\node (n1) at (10,1){$n-1$};
\node (n) at (9,0){$n$};
\draw[->,thick] (3.-60) .. controls (1.5,-0.7) and (0.5,-0.7) .. (3.-120);
\node at (1,-0.7){{\scriptsize$-$}};
\draw[->,thick] (5.-60) .. controls (3.5,-0.7) and (2.5,-0.7) .. (5.-120);
\node at (3,-0.7){{\scriptsize$-$}};
\draw[->,thick] (7.-60) .. controls (5.5,-0.7) and (4.5,-0.7) .. (7.-120);
\node at (5,-0.7){{\scriptsize$-$}};
\draw[->,thick] (n.-60) .. controls (9.5,-0.7) and (8.5,-0.7) .. (n.-120);
\node at (9,-0.7){{\scriptsize$-$}};
\path[->,thick,bend left=15]
(1) edge node[above=-0.5mm] {{\small $+$}} (2)
(2) edge node[near start,below=0.2mm] {{\scriptsize $+$}} (3)
(3) edge node[near end,below=0.2mm] {{\scriptsize $+$}} (1)
(2) edge node[above=-0.5mm] {{\scriptsize $+$}} (4)
(4) edge node[near start,below=0.2mm] {{\scriptsize $+$}} (5)
(5) edge node[near end,below=0.2mm] {{\scriptsize $+$}} (2)
(4) edge node[above=-0.5mm] {{\scriptsize $+$}} (6)
(6) edge node[near start,below=0.2mm] {{\scriptsize $+$}} (7)
(7) edge node[near end,below=0.2mm] {{\scriptsize $+$}} (4)
(n3) edge node[above=-0.5mm] {{\scriptsize $+$}} (n1)
(n1) edge node[near start,below=0.2mm] {{\scriptsize $+$}} (n)
(n)  edge node[near end,below=0.2mm] {{\scriptsize $+$}} (n3)
;
\end{tikzpicture}
\end{center}}
\end{example}

\begin{remark}\label{rem:upperbound}
{\em Theorems~\ref{newrule1} and \ref{renewrule1} are already relevant compared to the upper bounds $2^{\tau^+}$ and $A(n,g^+)$. Indeed, if $G$ is signed digraph described above,  then $\tau^+=\lceil\frac{n-1}{4}\rceil$, thus the first upper-bound is exponential with $n$, and $g^+=3$ so the second upper bound is at least $\frac{2^{n}}{n^2+1}$ by the Gilbert bound, and is also exponential with $n$. However, as explained above, $G$ satisfies the conditions of Theorems~\ref{newrule1} and \ref{renewrule1}, which give an upper bound equal to one only.}
\end{remark}

\begin{example}\label{ex2}
{\em Suppose that $G$ is the signed digraph described below. Each negative cycle $C$ is of length three and contains a vertex $v\in\{3,5,\dots,n\}$ with a positive loop. The arc $(v-1)\to v$ of $C$ then satisfies the condition of Theorem \ref{newrule2}. Thus $f$ has at least one fixed point.
\begin{center}
\begin{tikzpicture}
\node (1) at (0,1){1};
\node (2) at (2,1){2};
\node (3) at (1,0){3};
\node (4) at (4,1){4};
\node (5) at (3,0){5};
\node (6) at (6,1){6};
\node (7) at (5,0){7};
\node (...) at (7,1){$\cdots\quad$};
\node (n3) at (8,1){$n-3$};
\node (n1) at (10,1){$n-1$};
\node (n) at (9,0){$n$};
\draw[->,thick] (3.-60) .. controls (1.5,-0.7) and (0.5,-0.7) .. (3.-120);
\node at (1,-0.75){{\scriptsize$+$}};
\draw[->,thick] (5.-60) .. controls (3.5,-0.7) and (2.5,-0.7) .. (5.-120);
\node at (3,-0.75){{\scriptsize$+$}};
\draw[->,thick] (7.-60) .. controls (5.5,-0.7) and (4.5,-0.7) .. (7.-120);
\node at (5,-0.75){{\scriptsize$+$}};
\draw[->,thick] (n.-60) .. controls (9.5,-0.7) and (8.5,-0.7) .. (n.-120);
\node at (9,-0.75){{\scriptsize$+$}};
\path[->,thick,bend left=15]
(1) edge node[above=-0.5mm] {{\small $-$}} (2)
(2) edge node[near start,below=0.2mm] {{\scriptsize $-$}} (3)
(3) edge node[near end,below=0.2mm] {{\scriptsize $-$}} (1)
(2) edge node[above=-0.5mm] {{\scriptsize $-$}} (4)
(4) edge node[near start,below=0.2mm] {{\scriptsize $-$}} (5)
(5) edge node[near end,below=0.2mm] {{\scriptsize $-$}} (2)
(4) edge node[above=-0.5mm] {{\scriptsize $-$}} (6)
(6) edge node[near start,below=0.2mm] {{\scriptsize $-$}} (7)
(7) edge node[near end,below=0.2mm] {{\scriptsize $-$}} (4)
(n3) edge node[above=-0.5mm] {{\scriptsize $-$}} (n1)
(n1) edge node[near start,below=0.2mm] {{\scriptsize $-$}} (n)
(n)  edge node[near end,below=0.2mm] {{\scriptsize $-$}} (n3)
;
\end{tikzpicture}
\end{center}}
\end{example}

From a technical point of view, proofs are mainly based on refinements of Aracena's arguments. The innovation principally relies in the nature of the statements, which generalize previous results by taking into account both kinds of cycles simultaneously, and only use information on $G$. For uniqueness results, we will prove a theorem that generalizes Theorem~\ref{rule1} and that directly implies Theorems~\ref{newrule1}, \ref{renewrule1} as well as upper bounds that only depend on $G$ and that improve the two upper-bounds discussed above. 

\medskip
Let us say that a positive (resp. negative) arc of $G$ from $u$ to $v$ is {\bf canalized} by $f$ if there exists $c\in\B$ such that $f_v(x)=c$ for all $x\in\B^n$ with $x_u=c$ (resp. $x_u\neq c$). The proofs reveal interesting properties on networks without canalized arc. In particular, using mainly graph theoretic arguments, we will prove the following theorem, established by Aracena under the hypothesis that $G$ is strongly connected and has no negative cycle \cite[Theorem 6]{A08}. 

\begin{theorem}\label{strong}
If $G$ is strongly connected, has a unique negative cycle and at least one positive cycle, and if $f$ canalizes no arc that belongs to the negative cycle, then $f$ has two fixed points with Hamming distance $n$. 
\end{theorem}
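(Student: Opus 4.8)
The plan is to gauge-transform $f$ so that the unique negative cycle becomes a single negative arc, and then to verify that the two constant configurations $\mathbf 0=(0,\dots,0)$ and $\mathbf 1=(1,\dots,1)$ are fixed points; the non-canalization hypothesis is exactly what forces this at the one vertex where monotonicity fails. Write $\bar x$ for the componentwise complement of $x\in\B^n$. For $s\in\B^n$, conjugating $f$ by the coordinate flip $\phi_s$ that complements the coordinates in the support of $s$ produces a network $g:=\phi_s\circ f\circ\phi_s$ whose interaction graph is obtained from $G$ by the switching that multiplies the sign of each arc $(wv,\cdot)$ by $(-1)^{s_w+s_v}$. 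This conjugation is an involution: it preserves strong connectivity, the sign and number of cycles, the existence of a positive cycle, and the canalization status of every arc, and since $\phi_s(\bar x)=\overline{\phi_s(x)}$ it carries a pair of fixed points at Hamming distance $n$ to another such pair. Thus it suffices to prove the statement for $g$. I would choose an arc $a=(uv,\e)$ of the unique negative cycle $C$ for which $G\setminus a$ is balanced (see the last paragraph), i.e. admits a switching $s$ turning all its arcs positive; since the sign of $C$ is preserved and the part of $C$ lying in $G\setminus a$ is then all-positive, the arc $a$ becomes the unique negative arc of the switched graph. After replacing $f$ by $g$ I may therefore assume that $G$ has exactly one negative arc $a=(uv,-)$ and that every other arc is positive.

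Now I would read off the two fixed points. Fix any vertex $i\neq v$. All in-arcs of $i$ are positive, so $f_i$ is non-decreasing in each variable; strong connectivity gives $i$ an in-arc, so $f_i$ is non-constant, and a non-constant non-decreasing Boolean function attains $0$ at $\mathbf 0$ and $1$ at $\mathbf 1$. Hence $f_i(\mathbf 0)=0$ and $f_i(\mathbf 1)=1$. For the remaining vertex $v$, the function $f_v$ is non-decreasing in every variable except possibly $x_u$. Here I use that $f$ canalizes no arc of $C$, applied to the negative arc $a=(uv,-)$: by definition this means that for each $c\in\B$ it is false that $x_u\neq c$ forces $f_v(x)=c$. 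Taking $c=0$ yields some $x$ with $x_u=1$ and $f_v(x)=1$; freezing $x_u=1$ turns $f_v$ into a non-decreasing function of the remaining variables that takes the value $1$, hence takes the value $1$ at its top, so $f_v(\mathbf 1)=1$. Symmetrically, $c=1$ yields some $x$ with $x_u=0$ and $f_v(x)=0$, and freezing $x_u=0$ gives $f_v(\mathbf 0)=0$.

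Combining the two paragraphs, $f(\mathbf 0)=\mathbf 0$ and $f(\mathbf 1)=\mathbf 1$, which are two fixed points at Hamming distance $n$; undoing the conjugation produces two fixed points of the original network at Hamming distance $n$, as claimed. Note that the existence of a positive cycle is not used directly here: it is implied, since a strongly connected graph whose only cycle were the unique negative cycle would be a single cycle, forcing every local function to depend on one variable and hence to canalize every arc of $C$.

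I expect the dynamical core above to be short and essentially forced, and the real work to be the reduction in the first paragraph, namely the existence of an arc $a\in C$ with $G\setminus a$ balanced. The subtle point is that ``all directed cycles positive'' does not in general imply ``balanced'' (switchable to all-positive), the two notions agreeing only when the directed cycles span the cycle space of the underlying graph, which holds within a strongly connected digraph. Since $G\setminus a$ need not be strongly connected, I would argue componentwise: its cycle space is the direct sum of the cycle spaces of its strong components, each spanned by its directed cycles, all of which are positive because $C$ is the unique negative cycle of $G$ and uses $a$. The genuine difficulty is to guarantee that some arc of $C$ can be removed without leaving a negative \emph{undirected} cycle; such a cycle avoiding $a$ can arise only when a positive directed cycle shares the arc $a$ with $C$, so the heart of the matter is to show, using the uniqueness of $C$ together with the non-canalization hypothesis (which rules out the low-in-degree configurations along $C$), that $C$ has an arc lying on no other directed cycle; the intuition is that otherwise the backward shortcuts around $C$ would assemble into a second negative cycle. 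This last combinatorial lemma is where I expect the main obstacle to lie.
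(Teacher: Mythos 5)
Your reduction and fixed-point verification are sound, but they are essentially the paper's own argument in different clothing: the switching $s$ you apply is exactly a two-coloring of $G\setminus a$ in the sense of Theorem~\ref{thm_Harary}, your monotonicity argument at the vertices $i\neq v$ is Lemma~\ref{lem_ex_1}, and your non-canalization argument at $v$ is Lemma~\ref{lem_ex_2}. (One small point in your favor: you do not need the paper's extra refinement of choosing $a$ so that its terminal vertex has in-degree at least two, since non-canalization of $a$ already excludes the case where $f_v$ depends on $x_u$ alone.) The genuine gap lies precisely where you place ``the main obstacle'': the existence of an arc of $C$ lying on no other directed cycle is never proved. This is Lemma~\ref{graph} of the paper, it is the heart of the theorem, and its proof occupies most of the paper's effort (a delicate argument with alternative paths and negative closed walks). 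Note also that it is a purely graph-theoretic fact --- uniqueness of the negative cycle suffices --- so your suggestion that the non-canalization hypothesis is needed at this point is a red herring; canalization plays no role there.

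Moreover, even granting that lemma, your derivation of ``$G\setminus a$ is balanced'' rests on a false claim: the undirected cycle space of a digraph is \emph{not} the direct sum of the cycle spaces of its strong components. Counterexample: vertices $1,2,3$ with arcs $1\to 2$ negative and $3\to 2$, $3\to 1$ positive; every strong component is trivial and there is no directed cycle at all, yet the underlying triangle is a negative undirected cycle, so no two-coloring exists. What rescues this step in the paper is structure, not cycle-space algebra: since $G$ is strong and $C$ is the unique cycle through $a$, the strong components of $G\setminus a$ are linearly ordered and exactly one arc leaves each non-terminal component (two arcs leaving a component would yield two distinct paths from $v$ to $u$, hence two distinct cycles through $a$). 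With this single-bridge chain, every cycle of $(G\setminus a)^*$ either lies inside one strong component, where it is positive by Lemma~\ref{lem_GG*}, or is a positive length-two bridge cycle, so Theorem~\ref{thm_Harary} applies. To complete your proof you would have to supply both this structural argument and Lemma~\ref{graph}; together they constitute essentially the entire content of the paper's proof.
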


Let us mention the few other theoretical works we know that consider connections between positive and negative cycles in Boolean networks. They differ from the present work by the hypothesis made on $G$. In \cite{DNS12} and \cite{MNR14} is given a comprehensive analysis of the synchronous and asynchronous dynamics of Boolean networks whose interaction graph $G$ consists of two cycles that share exactly one vertex ($G$ is then a so-called double-cycle). Both in the synchronous and asynchronous case, the dynamics is easier to understand when both cycles are positive than when the two cycles have different signs, and the most intriguing and difficult case occurs when both cycles are negative. Besides, \cite{DR12} proposes a study of the number of fixed points under the following hypothesis: $G$ is strongly connected, has a vertex $v$ meeting every cycle, and all the vertices $u\neq v$ are of in-degree one (so $\tau^+= 1$ and $f$ has thus at most two fixed points). The results of \cite{DR12} not contained in \cite{A08} are then essentially the following. Firstly, if $G$ has a unique positive (resp. negative) cycle and at least one negative (resp. positive) cycle, then $f$ has at most (resp. at least) one fixed point (this is an easy consequence of Theorems~\ref{newrule1} and \ref{newrule2}, as explained below, in Remarks \ref{rem:DR1} and \ref{rem:DR2}). Secondly, if $G$ has at least two positive cycles and two negative cycles, then $f$ may have zero, one or two fixed points. Other works based on simulations and considering connections between positive and negative cycles can be found in \cite{KC07,SVL08}.

\medskip 
The paper is organized as follows. Preliminaries are given in Section~\ref{sec:preliminaries}. Uniqueness results and the resulting upper-bounds are given in Sections~\ref{sec:uniqueness} and \ref{sec:upperbound}. Existence results are given in Section~\ref{sec:existence}. Concluding remarks are given in Section~\ref{sec:conclusion}. 

%%%%%%%%%%%%%%%%%%%%%%%%%%%%%%%%%%%%%%%%%%%%%%%%%%%%%%
\section{Preliminaries}\label{sec:preliminaries}
%%%%%%%%%%%%%%%%%%%%%%%%%%%%%%%%%%%%%%%%%%%%%%%%%%%%%%

A signed digraph $G$ has a vertex set $V(G)$ and an arc set $A(G)$. Each arc $a\in A(G)$ has an initial vertex $v$, a terminal vertex $u$, a {\em sign} $\e\in\{+,-\}$, and is written as $a=(uv,\e)$. We also say that $a=(uv,\e)$ is an arc from $u$ to $v$, that $a$ is an {\em in-coming arc} of $v$ of sign $\e$, and that $u$ is an {\em in-neighbor} of $v$ of sign $\e$. The {\em in-degree} of $v$ is the number of in-coming arcs of $v$. A {\em source} is a vertex of in-degree zero. The set of in-neighbors of $v$ is denoted $\In(v)$, and the set of in-neighbors of sign $\e$ is denoted $\In^\e(v)$. We say that $G$ is {\em simple} if $\In^+(v)$ and $\In^-(v)$ are disjoint for every vertex $v$. We abusively write $G=\{v\}$ to mean that $G$ is the {\em trivial graph} with $v$ as unique vertex and no arc. If $G'$ is another signed digraph then $G\cup G'$ is the signed digraph with vertex set $V(G)\cup V(G')$ and arc set $A(G)\cup A(G')$. If $a\in V(G)^2\times \{+,-\}$ then $G\cup a$ has vertex set $V(G)$ and arc set $A(G)\cup\{a\}$. 

\medskip
A {\em subgraph} of $G$ is a signed digraph obtained from $G$ by removing arcs or vertices (with the attached arcs). We write $G'\subseteq G$ to mean that $G'$ is a subgraph of $G$. If $x$ is a vertex or an arc, then $G\setminus x$ is the subgraph obtained from $G$ by removing $x$. The subgraph of $G$ {\em induced} by a set of vertices $I\subseteq V(G)$, denoted $G[I]$, is the subgraph obtained from $G$ by removing every vertex not in $I$. We denote by $G^I$ the subgraph obtained from $G$ by removing every arc with a terminal vertex in $I$. 

\medskip
Paths and cycles of $G$ are always directed and regarded as simple subgraphs. The {\em sign} of a path or a cycle is the product of the signs of its arcs. Thus a path or a cycle is positive if and only if it contains an even number of negative arcs. A {\em strong component} (or component for short) of $G$ is a maximal set of vertices $I$ (with respect to the inclusion relation) such that $G[I]$ is strongly connected ({\em strong} for short). Such a component $I$ is {\em trivial} if $|I|=1$. If $G$ has no arc from $V(G)\setminus I$ to $I$ then $I$ is an {\em initial component}. If $G$ has no arc from $I$ to $V(G)\setminus I$, then $I$ is a {\em terminal component}. If $C$ is a cycle and $u,v\in V(C)$ then $C[u,v]$ is the path from $u$ to $v$ contained in $C$ (with the convention that this path is the trivial path $P=\{u\}$ if $u=v$). If $P$ is a path and if $u$ and $v$ are vertices in $P$ such that $v$ does not appear before $u$ in $P$, then $P[u,v]$ is the path from $u$ to $v$ contained in $P$.  

\medskip
In all the following, $f$ always denotes an $n$-component Boolean network, that is $f:\B^n\to\B^n$, and $G$ always denotes its interaction graph, as defined in the introduction. The vertex set of $G$ is thus $[n]:=\{1,\dots,n\}$. If $x\in\B^n$ and $I\subseteq [n]$ then $x_I$ is the restriction of $x$ to the components in $I$. The {\em Hamming distance} between two points $x,y\in\B^n$, denoted $d(x,y)$, is the number of $v\in [n]$ such that $x_v\neq y_v$. We denote by $\overline{x}^v$ the point $y\in\B^n$ that differs from $x$ only in $x_v\neq y_v$. We denote by $\overline{x}$ the point at Hamming distance $n$ from~$x$. 

\medskip
For all vertex $v$ of $G$, we define the partial order $\leq_v$ on $\B^n$ as follows: for all $x,y\in\B^n$, 
\[
x\leq_v y
\quad
\iff
\quad
x_{\In^+(v)}\leq y_{\In^+(v)}
\text{ and }
x_{\In^-(v)}\geq y_{\In^-(v)}.
\]
Note that if $x\leq_v y$ then $x_u=y_u$ for all $u\in\In^+(v)\cap\In^-(v)$. A very basic property is that $f_v$ is non-decreasing with respect to the partial order $\leq_v$. 

%%%%%%%%%%%%%
\begin{lemma}\label{local_lemma}
For all $v\in [n]$ and $x,y\in\B^n$,
\[
x\leq_v y
\quad
\Rightarrow
\quad
f_v(x)\leq f_v(y).
\]
\end{lemma}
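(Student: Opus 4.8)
The plan is to reduce the statement to single-coordinate flips. First I would record the elementary reformulation of the definition of the interaction graph that the whole argument rests on: since each $f_{vu}$ takes values in $\{-1,0,1\}$, a vertex $u$ fails to be a positive in-neighbour of $v$ exactly when $f_{vu}(z)\le 0$ for all $z\in\B^n$, and fails to be a negative in-neighbour exactly when $f_{vu}(z)\ge 0$ for all $z$; in particular, if $u\notin\In(v)$ then $f_{vu}\equiv 0$, so $f_v$ does not depend on the coordinate $x_u$. Consequently $f_v(x)$ and $f_v(y)$ depend only on the restrictions of $x$ and $y$ to $\In(v)=\In^+(v)\cup\In^-(v)$.

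Assume now $x\le_v y$. On $\In^+(v)\cap\In^-(v)$ the hypothesis forces $x_u=y_u$, so $x$ and $y$ can differ, among the coordinates of $\In(v)$, only at vertices lying in exactly one of $\In^+(v)$, $\In^-(v)$. Let $u_1,\dots,u_k$ enumerate these coordinates in any order and interpolate: set $z^0=x$ and let $z^i$ be obtained from $z^{i-1}$ by resetting the coordinate $u_i$ to $y_{u_i}$. Then $z^k$ agrees with $y$ on all of $\In(v)$, so $f_v(z^k)=f_v(y)$, while $f_v(z^0)=f_v(x)$; thus it suffices to prove $f_v(z^{i-1})\le f_v(z^i)$ for each $i$.

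Fix $i$ and write $u=u_i$. Since $z^{i-1}$ and $z^i$ differ only in the coordinate $u$, the difference $f_v(z^i)-f_v(z^{i-1})$ equals $\pm f_{vu}(z^{i-1})$, the sign depending on whether the flip raises or lowers $x_u$. If $u\in\In^+(v)\setminus\In^-(v)$, then $x\le_v y$ gives $x_u\le y_u$, so the flip raises $x_u$ from $0$ to $1$ and $f_v(z^i)-f_v(z^{i-1})=f_{vu}(z^{i-1})\ge 0$ because $u\notin\In^-(v)$. Symmetrically, if $u\in\In^-(v)\setminus\In^+(v)$, then $x_u\ge y_u$, the flip lowers $x_u$ from $1$ to $0$, and $f_v(z^i)-f_v(z^{i-1})=-f_{vu}(z^{i-1})\ge 0$ because $u\notin\In^+(v)$. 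In either case $f_v$ does not decrease, and chaining the inequalities yields $f_v(x)\le f_v(y)$.

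The argument is essentially bookkeeping, so I expect no serious obstacle; the single point that must be handled carefully is the behaviour at vertices of $\In^+(v)\cap\In^-(v)$, where the sign of $f_{vu}$ is genuinely uncontrolled. The order $\le_v$ is precisely designed to freeze those coordinates (it forces $x_u=y_u$ there), which is exactly what makes the one-coordinate-at-a-time interpolation valid, and I would state this explicitly rather than let it be absorbed silently into the indexing.
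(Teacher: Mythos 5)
Your proof is correct and is essentially the paper's argument: the paper proves the lemma by induction on the number of in-neighbours of $v$ where $x$ and $y$ differ, flipping one such coordinate at a time and using the sign of the discrete derivative $f_{vu}$, which is exactly your telescoping chain $z^0,\dots,z^k$ written as an induction. Your explicit remarks that $f_v$ depends only on the coordinates in $\In(v)$ and that $\leq_v$ freezes the coordinates in $\In^+(v)\cap\In^-(v)$ are the same observations the paper uses, just stated more explicitly.
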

%%%%%%%%%%%%%

\begin{proof}[{\bf Proof}]
Let $I(x,y)$ be the set of $u\in\In(v)$ such that $x_u\neq y_u$. We prove the lemma by induction on the size of $I(x,y)$. If $I(x,y)$ is empty, then we clearly have $f_v(x)=f_v(y)$ thus the lemma holds. For the induction step, suppose that $I(x,y)$ contains some vertex $u$ and that $x\leq_v y$. If $u\in\In^+(v)$, then $x_u<y_u$ thus $u\not\in\In^-(v)$. We deduce that $f_v(x)\leq f_v(\overline{x}^u)$, since otherwise $f_{vu}(x)<0$ and thus $u\in\In^-(v)$, a contradiction. Furthermore, $\overline{x}^u\leq_v y$, and since $I(\overline{x}^u,y)=I(x,y)\setminus\{u\}$, by induction hypothesis, $f_v(\overline{x}^u)\leq f_v(y)$. Hence, $f_v(x)\leq f_v(y)$ as required. If $u\in\In^-(v)$ the proof is similar.
\end{proof}

For all $x\in\B^n$, we denote by $G(x)$ the subgraph of $G$ with vertex set $[n]$ that contains all the  arcs $(uv,+)$ of $G$ with $x_u=x_v$ and all the arcs $(uv,-)$ of $G$ with $x_u\neq x_v$. Hence, a path of $G(x)$ from $u$ to $v$ is positive if $x_u=x_v$ and negative if $x_u\neq x_v$. As a consequence, we have the following basic property, used several times below.

\begin{lemma}
For all $x\in\B^n$, all the cycles of $G(x)$ are positive.
\end{lemma}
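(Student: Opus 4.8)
The plan is to observe that the lemma is essentially the remark recorded immediately above it, specialized to closed paths. A cycle $C$ of $G(x)$ is a closed path, so after fixing any vertex $u$ on $C$ and traversing $C$ once, I obtain a path of $G(x)$ from $u$ to itself; since such a path from $u$ to $v$ is positive exactly when $x_u=x_v$, and here $v=u$, the cycle $C$ must be positive. It remains only to justify this path-sign dictionary and to make the underlying parity bookkeeping explicit.

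To do so, I would fix a cycle $C$ and walk along its arcs, recording the Boolean value $x_w$ at each successive vertex $w$. By the definition of $G(x)$, every positive arc of $C$ joins two vertices with equal values of $x$, whereas every negative arc joins two vertices with different values; thus the recorded value is preserved across each positive arc and flipped across each negative arc. The decisive step is then a parity count: because the walk begins and ends at the same vertex $u$, the first and last recorded values both equal $x_u$, so the total number of flips around $C$ is even. Hence $C$ contains an even number of negative arcs, and its sign — the product of the signs of its arcs — is $+$.

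If one prefers to avoid the verbal argument, the same conclusion follows algebraically: encode each vertex value as $(-1)^{x_w}\in\{+1,-1\}$ and note that any arc $(uv,\e)$ present in $G(x)$ has numerical sign $(-1)^{x_u+x_v}$, since this is $+1$ exactly when $x_u=x_v$. Multiplying over the arcs of $C$, every vertex value contributes to exactly two exponents, so the total exponent equals twice the sum of the $x_w$ over the vertices of $C$ and is therefore even, forcing the product to be $+1$. I do not anticipate any real obstacle here: the only substantive point is the completely elementary fact that traversing a cycle returns to its starting value, which is precisely what pins the number of negative arcs to an even number.
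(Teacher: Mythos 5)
Your proof is correct and is exactly the argument the paper has in mind: the paper states this lemma without proof, as an immediate consequence of the preceding remark that a path of $G(x)$ from $u$ to $v$ is positive if and only if $x_u=x_v$, which you simply specialize to a closed traversal. Your explicit parity bookkeeping (an even number of sign flips around a cycle) is a faithful, slightly more detailed writing-out of that same observation.
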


%%%%%%%%%%%%%%%%%%%%%%%%%%%%%%%%%%%%%%%%%%%%%%%%%%%%%%
\section{Uniqueness results}\label{sec:uniqueness}
%%%%%%%%%%%%%%%%%%%%%%%%%%%%%%%%%%%%%%%%%%%%%%%%%%%%%%

Uniqueness results are based on the following definition. An arc $a=(uv,\e)$ in a positive cycle $C$ is a {\bf special arc} of $C$ if the following holds in  $G\setminus a$:
\begin{itemize}
\item[$(i)$] $v$ is not a source,
\item[$(ii)$] $v$ is not contained in a positive cycle,
\item[$(iii)$] every path from a positive cycle or a source to $v$ intersects $C\setminus v$.
\end{itemize} 

The main result of this section is the following generalization of Theorem~\ref{rule1}. 

%%%%%%%%%%%%%
\begin{theorem}\label{uniqueness}
If $x$ and $y$ are distinct fixed points of $f$, then $G$ has a positive cycle $C$ with no special arc such that $x_v\neq y_v$ for every vertex $v$ in $C$. 
\end{theorem}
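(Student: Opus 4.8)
The plan is to follow Aracena's proof of Theorem~\ref{rule1} and then to refine the cycle extraction so as to forbid special arcs. First I would fix two distinct fixed points $x,y$ and set $D=\{v\in[n]:x_v\neq y_v\}\neq\emptyset$. For $v\in D$ we have $f_v(x)=x_v\neq y_v=f_v(y)$; if $f_v(x)<f_v(y)$ then $y\not\leq_v x$ by Lemma~\ref{local_lemma}, and if $f_v(x)>f_v(y)$ then $x\not\leq_v y$, so in either case unfolding the definition of $\leq_v$ produces an in-neighbor $u\in D$ of $v$ with $(uv,+)\in G(x)$ when $x_u=x_v$ and $(uv,-)\in G(x)$ when $x_u\neq x_v$. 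Hence, writing $H:=G(x)[D]$, every vertex of $H$ has in-degree at least one, and since every cycle of $G(x)$ is positive, every cycle of $H$ is a positive cycle of $G$ with all vertices in $D$. This already yields Theorem~\ref{rule1}; the whole difficulty is to pick the cycle so that none of its arcs is special.

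The key structural observation I would establish is that a special arc forces a degree restriction. Suppose $a=(uv,\e)$ is a special arc of a positive cycle $C\subseteq H$, and let $S$ be the strong component of $H$ containing $C$. If $v$ had a second in-neighbor inside $S$, via an arc $b\neq a$ of $H$ with tail in $S$, then a path from $v$ to the tail of $b$ inside the strong graph $H[S]$, closed by $b$, would give a cycle of $H$ through $v$ avoiding $a$ (a simple path leaving $v$ uses no in-arc of $v$, so it cannot use $a$); this cycle is positive and lives in $G\setminus a$, contradicting condition~$(ii)$. So the terminal vertex of a special arc has in-degree one inside its component. This is exactly why I would choose $C$ inside an \emph{initial} strong component $S$ of $H$: since $H$ has minimum in-degree one, initiality forces $H[S]$ to have minimum in-degree one as well, so $S$ is non-trivial and contains a positive cycle $C$, and moreover all in-arcs of a vertex of $S$ stay in $S$. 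Thus along such a $C$, the terminal vertex $v$ of any special arc has $a$ as its unique in-arc in all of $H$.

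It remains to rule out special arcs at these in-degree-one vertices, where conditions $(i)$ and $(iii)$ come in. From $(i)$ I get an arc $c=(u'v,\e')\neq a$ of $G$, and since $a$ is the only in-arc of $v$ in $H$, this $c$ is not an arc of $H$. The favorable case is $u'\in D\setminus S$: then $u'$ lies in a strong component distinct from the initial one $S$, so walking backward along in-arcs of $H$ from $u'$ (each vertex of $D$ has one) reaches a positive cycle $C'$ of $H$ contained in $D\setminus S$, hence disjoint from $C$. Concatenating a backward path from $C'$ to $u'$ with the arc $c$ then gives a path in $G\setminus a$ from a positive cycle to $v$ that meets $S\supseteq C\setminus v$ only at $v$, contradicting $(iii)$; a source of $G$ reached from $u'$ off $C\setminus v$ would contradict $(iii)$ in the same way.

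The hard part will be the residual configuration in which every extra arc $c=(u'v,\e')$ into such a $v$ is internal: either $u'\in C\setminus v$, in which case $c$ is a frustrated chord whose cycle through $v$ is \emph{negative} (a short sign computation with the $2$-colouring of $D$ induced by $x$ shows these chords can only close into negative cycles), or $u'\notin D$ is a coordinate on which $x$ and $y$ agree and from which, within $G\setminus a$ and off $C\setminus v$, one reaches only negative cycles. In both situations condition~$(iii)$ may genuinely hold, so an arbitrary initial component does not suffice: one must choose $C$ inside $S$ so as to route around the vertices carrying such internal in-arcs and prove this is always possible when $f$ has two distinct fixed points. Carrying out this refined selection — equivalently, showing that the separation asserted by $(iii)$ cannot be sustained simultaneously at every vertex of a well-chosen positive cycle of $H$ — is the crux; by contrast the accompanying verifications (positivity and disjointness of the extracted cycles, and that the constructed paths avoid $a$) are routine.
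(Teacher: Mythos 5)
Your first paragraph (Aracena's argument giving a cycle of $H=G(x)[D]$ with all vertices in $D$) matches the paper's Lemma~\ref{alphabeta_lemma}, but from there your strategy diverges and, as you concede, does not close: the ``residual configuration'' is a genuine gap, and it cannot be repaired by any refinement of the cycle-selection idea. The reason is that the claim you are implicitly trying to prove --- that $G(x)[D]$ always contains a positive cycle with no special arc --- is false as a purely graph-theoretic statement about $G$, $x$ and $D$. Take $G$ as in Figure~\ref{fig} and let $x,y$ differ exactly on $D=\{1,2,3\}$ with $x_1=x_2=x_3$. Then $H=G(x)[D]$ is the single triangle $1\to 2\to 3\to 1$ (the negative loop at $3$ is not in $G(x)$), and its arc from $2$ to $3$ \emph{is} a special arc of that cycle in $G$ (Remark~\ref{rem:upperbound}); moreover this triangle is itself an initial strong component of $H$, and there is no other cycle to route to. So no selection can work; what really rules out this configuration is that such a pair of fixed points cannot exist, and proving that requires functional information from the \emph{second} fixed point $y$, which your argument uses only to define $D$. (Incidentally, even your ``favorable case'' has a hole: the backward walk from $u'$ may enter $S$, since an initial component can have out-going arcs, and then the cycle it reaches need not be disjoint from $C$.)

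The idea you are missing is the paper's Lemma~\ref{canalizing}, applied \emph{twice}. If $a=(uv,\e)$ is a special arc of a positive cycle $C\subseteq G(x)$ and $x$ is a fixed point, then $v$ is a source of $G(x)\setminus a$: otherwise a path of $G(x)\setminus a$ from a source or a cycle to $v$ would be, by Lemma~\ref{source_lemma}, a path from a source or a positive cycle of $G\setminus a$ to $v$, hence would meet $C\setminus v$ by condition (iii), and closing it along $C$ would produce a positive cycle of $G\setminus a$ through $v$, contradicting condition (ii). Monotonicity of $f_v$ for $\leq_v$ then gives $f_v(z)=x_v$ for every $z$ with $z_u=x_u$. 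Now the key symmetry: since $x$ and $y$ differ on every vertex of $D$, we have $G(x)[D]=G(y)[D]$, so the \emph{same} cycle $C$ lies in $G(y)$ and the lemma applies at $y$ as well, giving $f_v(z)=y_v$ whenever $z_u=y_u$. Since $x_u\neq y_u$ and $x_v\neq y_v$, the two canalizations together force $f_v(z)=x_v\iff z_u=x_u$, so $f_v$ depends on the coordinate $u$ only; hence $a$ is the unique in-coming arc of $v$ in $G$, so $v$ is a source of $G\setminus a$, contradicting condition (i). Thus \emph{every} cycle produced by Aracena's lemma is automatically free of special arcs --- no routing or choice of initial component is needed, and the contradiction lands on condition (i), not on (ii) or (iii) as you were attempting.
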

%%%%%%%%%%%%%

The following lemma has been proved in \cite[Theorem 9]{A08} and already implies Theorem~\ref{rule1}. We include a short proof for completeness.  

%%%%%%%%%%%%%%
\begin{lemma}\label{alphabeta_lemma}
Suppose that $x$ and $y$ are distinct fixed points of $f$. Then $G(x)$ has a cycle $C$ such that $x_v\neq y_v$ for all vertex $v$ in $C$. 
\end{lemma}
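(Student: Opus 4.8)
The plan is to reduce the statement to a purely local claim and then extract a cycle by an in-degree argument. Write $\Delta=\{v\in[n]:x_v\neq y_v\}$, which is nonempty since $x\neq y$. The claim I would establish is: for every $v\in\Delta$ there is an in-neighbour $u\in\Delta$ of $v$ such that the arc from $u$ to $v$ is present in $G(x)$. Granting this, every vertex of the induced subgraph $G(x)[\Delta]$ has in-degree at least one inside $\Delta$. Since a finite digraph with minimum in-degree at least one contains a directed cycle (follow incoming arcs backwards until a vertex repeats), there is a cycle $C\subseteq G(x)[\Delta]$, and by construction $x_v\neq y_v$ for every vertex $v$ of $C$, which is exactly what we want.

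To prove the local claim I would fix $v\in\Delta$ and first dispose of the degenerate case: if $v$ has an in-neighbour $u\in\Delta$ lying in $\In^+(v)\cap\In^-(v)$, then both a positive and a negative arc from $u$ to $v$ exist, and since exactly one of $x_u=x_v$, $x_u\neq x_v$ holds, exactly one of these arcs lies in $G(x)$, so we are done. Otherwise, assume for concreteness $x_v=0$ (hence $y_v=1$; the case $x_v=1$ is handled by the dual construction). I would introduce the auxiliary point $z\in\B^n$ defined by $z_u=x_u\vee y_u$ for $u\in\In^+(v)$, $z_u=x_u\wedge y_u$ for $u\in\In^-(v)$ (these two rules agree on $\In^+(v)\cap\In^-(v)$, where $x_u=y_u$ under the standing assumption), and $z_u=x_u$ for $u\notin\In(v)$. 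By construction $y\leq_v z$, so Lemma~\ref{local_lemma} forces $f_v(z)\geq f_v(y)=y_v=1$, i.e.\ $f_v(z)=1$.

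It then remains to compare $z$ with $x$ and do the sign bookkeeping. The points $z$ and $x$ agree off $\In(v)$ and on $\In(v)\setminus\Delta$, so $z$ differs from $x$ exactly on the set $S$ of in-neighbours $u\in\Delta$ with $z_u\neq x_u$; a direct check identifies $S$ as the positive in-neighbours $u\in\Delta$ with $x_u=0=x_v$ together with the negative in-neighbours $u\in\Delta$ with $x_u=1\neq x_v$. In either case the corresponding arc from $u$ to $v$ lies in $G(x)$, because a positive arc is present iff $x_u=x_v$ and a negative arc iff $x_u\neq x_v$. Finally, since $f_v(x)=x_v=0\neq 1=f_v(z)$ we have $x\neq z$, hence $S\neq\emptyset$, which produces the required in-neighbour and proves the claim. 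The part I expect to need the most care is precisely this sign bookkeeping combined with the $\In^+(v)\cap\In^-(v)$ case: the whole argument hinges on choosing $z$ so that Lemma~\ref{local_lemma} pins $f_v(z)=1$ while $z$ is only allowed to disagree with $x$ along arcs that already belong to $G(x)$.
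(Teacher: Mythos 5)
Your proof is correct and takes essentially the same route as the paper: both work with the disagreement set, apply the monotonicity Lemma~\ref{local_lemma} at each vertex of that set to produce an in-coming arc of $G(x)$ from within the set, and then extract a cycle from the resulting minimum in-degree condition. The only difference is presentational: the paper assumes the induced subgraph has a source $v$ and derives $x\leq_v y$ (or $y\leq_v x$) for a contradiction, whereas you argue directly via the auxiliary point $z$, which coincides with $x$ precisely when $v$ is such a source.
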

%%%%%%%%%%%%%%

\begin{proof}[{\bf Proof}]
Let $I$ be the set of $v\in[n]$ such that $x_v\neq y_v$. Suppose, for a contradiction, that $G(x)[I]$ has a source, say $v$. If $x_v=1$, it means that $x_u=0$ for all $u\in \In^+(v)\cap I$ and $x_u=1$ for all $u\in \In^-(v)\cap I$. Since $x_u=y_u$ for all $u\not\in I$, we deduce that $x\leq_v y$ and thus $f_v(x)\leq f_v(y)$. Since $f_v(x)=x_v=1$ we have $f_v(y)=1$ which is a contradiction since $x_v\neq y_v=f_v(y)$. If $x_v=0$ we obtain a contradiction similarly. Hence, $G(x)[I]$ has no source and thus contains at least one cycle. 
\end{proof}

%%%%%%%%%%%%%
\begin{lemma}\label{source_lemma}
If $f(x)=x$ then every source of $G(x)$ is a source of $G$. 
\end{lemma}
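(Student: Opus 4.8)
The plan is to prove the statement directly: fix a source $v$ of $G(x)$ and show that $v$ has no in-coming arc in $G$ at all, i.e.\ that $\In(v)=\emptyset$. The guiding idea is that, since none of the in-coming arcs of $v$ survives in $G(x)$, every in-neighbor of $v$ must take the value that is ``extremal'' for $v$ relative to $x_v$; this places $x$ at an end of the partial order $\leq_v$, and then the monotonicity of $f_v$ (Lemma~\ref{local_lemma}) pins $f_v$ to a constant, which forbids $v$ from having in-neighbors.

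Concretely, I would first unwind the definition of $G(x)$ at $v$. By hypothesis $v$ is a source of $G(x)$, so for each $u\in\In^+(v)$ the positive arc $(uv,+)$ is absent from $G(x)$, forcing $x_u\neq x_v$, and for each $u\in\In^-(v)$ the negative arc $(uv,-)$ is absent, forcing $x_u=x_v$. I would then split on the value of $x_v$. Say $x_v=0$ (the case $x_v=1$ being symmetric); then $x_u=1$ for every $u\in\In^+(v)$ and $x_u=0$ for every $u\in\In^-(v)$, so $x_{\In^+(v)}$ is maximal and $x_{\In^-(v)}$ is minimal. Consequently $y_{\In^+(v)}\leq x_{\In^+(v)}$ and $y_{\In^-(v)}\geq x_{\In^-(v)}$ for every $y\in\B^n$, which is exactly $y\leq_v x$.

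With $x$ identified as the $\leq_v$-maximum, Lemma~\ref{local_lemma} gives $f_v(y)\leq f_v(x)$ for all $y$; and since $f(x)=x$ we have $f_v(x)=x_v=0$, so $f_v$ is constant equal to $0$. A constant $f_v$ has all its discrete derivatives $f_{vu}$ identically zero, so by the definition of the interaction graph $v$ has no in-neighbor in $G$; that is, $v$ is a source of $G$. The symmetric computation for $x_v=1$ yields $x_u=0$ for $u\in\In^+(v)$ and $x_u=1$ for $u\in\In^-(v)$, hence $x\leq_v y$ for all $y$, so $f_v\equiv 1$, and the same conclusion follows.

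I do not expect a genuine obstacle here: the argument is essentially a one-step application of monotonicity once the extremality of $x$ at $v$ is observed. The only points requiring care are getting the direction of the order right in the two cases (a missing positive arc forces disagreement, a missing negative arc forces agreement), and noticing that a hypothetical in-neighbor lying in $\In^+(v)\cap\In^-(v)$ simply cannot occur for a source of $G(x)$, since the two constraints $x_u\neq x_v$ and $x_u=x_v$ are incompatible --- so no special treatment of non-simple interactions is needed.
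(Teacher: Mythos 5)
Your proof is correct and follows essentially the same route as the paper's: a source of $G(x)$ forces $x$ to be extremal for the order $\leq_v$, so Lemma~\ref{local_lemma} makes $f_v$ constant, hence $v$ is a source of $G$. Your extra remark that $\In^+(v)\cap\In^-(v)$ must be empty for a source of $G(x)$ is a harmless (and accurate) bonus observation.
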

%%%%%%%%%%%%%

\begin{proof}[{\bf Proof}]
Suppose that $v$ is a source of $G(x)$. If $x_v=1$ it means that $x_u=0$ for all $u\in \In^+(v)$ and $x_u=1$ for all $u\in \In^-(v)$. Hence, for all $y\in\B^n$, we have $x\leq_v y$ and thus $f_v(x)\leq f_v(y)$. Since $f_v(x)=x_v=1$, we deduce that $f_v(y)=1$ for all $y\in\B^n$. Thus $f_v$ is constant and this is equivalent to say that $v$ is a source of $G$. If $x_v=0$ the proof is similar.  
\end{proof}

%%%%%%%%%%%%%
\begin{lemma}\label{canalizing}
Let $C$ be a positive cycle of $G$ with a special arc $a=(uv,\e)$. Let $x$ be a fixed point of $f$, and suppose that $C\subseteq G(x)$. Then $f_v(z)=x_v$ for all $z\in\B^n$ such that $z_u=x_u$. 
\end{lemma}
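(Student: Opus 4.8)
The plan is to prove the equivalent statement that the special arc $a$ is canalized, with canalizing value dictated by $x$. Since $a=(uv,\e)$ is an arc of $C\subseteq G(x)$, the definition of $G(x)$ forces $x_u=x_v$ when $\e=+$ and $x_u\neq x_v$ when $\e=-$; in both cases the desired conclusion ``$f_v(z)=x_v$ whenever $z_u=x_u$'' says precisely that freezing $z_u$ to $x_u$ canalizes $f_v$ to the value $x_v$. I record two facts for later use: $f_v(x)=x_v$ (as $x$ is a fixed point), and $x$ itself belongs to the hyperplane $H:=\{z\in\B^n:z_u=x_u\}$ on which $f_v$ is to be shown constant.

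First I would halve the problem using Lemma~\ref{local_lemma}. Because $f_v$ is non-decreasing for $\leq_v$, and freezing $z_u=x_u$ leaves $H$ bounded for $\leq_v$, the set $H$ has a $\leq_v$-least element $z^-$ and a $\leq_v$-greatest element $z^+$ (set each remaining positive in-neighbour of $v$ to $0$ and each remaining negative one to $1$ for $z^-$, and oppositely for $z^+$). From $z^-\leq_v x\leq_v z^+$ and $f_v(x)=x_v$ we obtain $f_v(z^-)\leq x_v\leq f_v(z^+)$, so one extremal value is automatically correct: it remains to prove $f_v(z^+)=0$ when $x_v=0$, and $f_v(z^-)=1$ when $x_v=1$. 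The two cases being symmetric, I would concentrate on $x_v=1$ and aim at $f_v(z^-)=1$, keeping the non-in-neighbour coordinates of $z^-$ free to be chosen.

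The core is to show this single extremal value equals $x_v$, and here the three defining conditions of a special arc enter. The guiding observation is that, in $G\setminus a$, any in-neighbour $t\neq u$ of $v$ on which $f_v$ can still depend after fixing $z_u=x_u$ must lie on a \emph{negative} cycle through $v$. Indeed, the portion $v\to\cdots\to u$ of $C$ (namely $C$ with the arc $a$ deleted) is a path of $G\setminus a$, so whenever $t$ is reachable in $G\setminus a$ from a vertex of $C\setminus v$, concatenation produces a cycle through $v$, which by $(ii)$ is negative; and condition $(iii)$ forbids $t$ from being fed by any source or positive cycle otherwise, while $(i)$ and Lemma~\ref{source_lemma} prevent $t$ from being a constant source, so the only remaining possibility is an input-free, all-negative initial structure, which admits no fixed point at all (the obstruction behind Theorem~\ref{rule2}) and contradicts the existence of $x$. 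This is cleanest in the setting of Theorem~\ref{newrule1}, where the vertices reaching $v$ form a non-trivial initial strong component with only negative cycles: freezing the unique incoming arc $a$ at $x_u$ gives a sub-network whose interaction graph has no positive cycle, so Theorem~\ref{rule1} makes $x$ restricted to that component its only fixed point. It then remains to show that such a negative cycle $N$ through $v$ cannot be \emph{active} at $v$ when $z_u=x_u$: were $f_v$ to respond to its $N$-predecessor $t$ at the unfavourable configuration $z^-$, then traversing $N$ once with all of its external inputs frozen at their $x$-values would realise $N$ as a genuine negative feedback loop without a fixed point, contradicting $f(x)=x$. This frustration forces $f_v(z^-)=x_v$.

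I expect the frustration step to be the main obstacle. The difficulty is that an arc $(\mathrm{pred}\to t')$ of a negative cycle $N$ guarantees only that $f_{t'}$ responds to its predecessor for \emph{some} input, not necessarily near $x$, so $N$ could be active at $v$ yet inactive elsewhere; ruling this out requires propagating activity around $N$ using the monotonicity of each $f_{t'}$ (Lemma~\ref{local_lemma}) together with the extremal choice $z^-$, and in particular fixing the free coordinates of $z^-$ so that the relations $f_{t'}(x)=x_{t'}$ along $N$ stay usable. A second source of friction is that the special-arc hypothesis is weaker than the clean initial-strong-component picture, so one cannot merely invoke Theorem~\ref{rule1} on a tidy sub-network: one must instead separate the in-neighbours of $v$ lying on $C\setminus v$ (whose values are rigidly tied by $C\subseteq G(x)$) from those lying off it (which $(iii)$ bars from any free source or positive cycle), and handle possible doubled arcs from $u$. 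Once non-constancy of $f_v$ on $H$ is shown to force such an inconsistency, $f_v\equiv x_v$ on $H$ follows, which is the assertion of the lemma.
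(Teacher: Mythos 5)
Your reduction via Lemma~\ref{local_lemma} to a single extremal evaluation is sound (it is essentially the closing step of the paper's own proof), but the ``frustration step'' that you yourself flag as the main obstacle is a genuine gap, and the sketch you give for it would not go through. You try to derive a contradiction from ``$f_v$ responds to an in-neighbour $t\neq u$ at $z^-$'' by freezing the external inputs of a negative cycle $N$ through $v$ at their $x$-values, claiming this realises $N$ as a negative feedback loop with no fixed point, contradicting $f(x)=x$. This fails for two reasons. First, as you concede, an arc of $G$ only certifies dependence at \emph{some} input, so the arcs of $N$ need not survive the freezing, and there is no way to propagate activity around $N$ from the single hypothesis $f_v(z^-)\neq x_v$. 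Second, and more fundamentally, a negative cycle in an interaction graph does not by itself preclude fixed points: the relevant obstruction (Aracena's) is a non-trivial \emph{initial component whose cycles are all negative}, and condition $(ii)$ only excludes positive cycles \emph{through $v$} in $G\setminus a$, so the frozen subnetwork on $V(N)$ may perfectly well contain positive cycles avoiding $v$. Worse, the restriction of $x$ to $V(N)$ is automatically a fixed point of any subnetwork obtained by freezing the remaining coordinates at their $x$-values, so no such obstruction can ever be exhibited there; the contradiction you aim for is unreachable along this route.

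The missing idea is to apply the special-arc conditions not to $G\setminus a$ but to the active graph $G(x)\setminus a$. Every cycle of $G(x)$ is positive, and every source of $G(x)$ is a source of $G$ (Lemma~\ref{source_lemma}); hence, if $v$ were not a source of $G(x)\setminus a$, walking backwards from $v$ inside $G(x)\setminus a$ until hitting a source or closing a cycle yields a path $P$ of $G(x)\setminus a$ from a source or a positive cycle of $G\setminus a$ to $v$. Condition $(iii)$ then forces $P$ to meet $C\setminus v$, say last at $w$, and $P[w,v]\cup C[v,w]$ --- which lies in $G(x)\setminus a$ precisely because $C\subseteq G(x)$ and the arc $a$ ends at $v$ --- is a cycle of $G(x)\setminus a$, hence a positive cycle of $G\setminus a$ through $v$, contradicting $(ii)$. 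Thus $v$ is a source of $G(x)\setminus a$, which says exactly what your argument lacks: $x$ assigns every in-neighbour $w\neq u$ of $v$ its unfavourable value ($x_w=0$ for $w\in\In^+(v)\setminus\{u\}$ and $x_w=1$ for $w\in\In^-(v)\setminus\{u\}$ when $x_v=1$, dually when $x_v=0$). In other words $x$ \emph{is} your extremal point on the coordinates that matter, so $x\leq_v z$ for every $z$ with $z_u=x_u$, and Lemma~\ref{local_lemma} concludes immediately; no dynamical argument about negative cycles is needed at all.
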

%%%%%%%%%%%%%

\begin{proof}[{\bf Proof}]
Let us prove that $v$ is a source of $G(x)\setminus a$. Suppose, for a contradiction, that this is not the case, and let $P$ be a path of $G(x)\setminus a$ from a source or a cycle of $G(x)\setminus a$ to $v$. By Lemma~\ref{source_lemma}, every source of $G(x)\setminus a$ is a source of $G\setminus a$, and since every cycle of $G(x)\setminus a$ is a positive cycle of $G\setminus a$, we deduce that $P$ is a path from a source or a positive cycle of $G\setminus a$ to $v$. We deduce from the definition of a special arc that $P$ intersects $C\setminus v$. Let $w$ be the last vertex of $P$ that belongs to $C\setminus v$. Then $C'=P[w,v]\cup C[v,w]$ is a cycle of $G(x)\setminus a$. Thus $C'$ is a positive cycle of $G\setminus a$ containing $v$, and this contradicts the fact that $a$ is a special arc. Thus $v$ is indeed a source of $G(x)\setminus a$. 

\medskip
Let $z\in\B^n$ with $z_u=x_u$. Suppose that $x_v=1$. Since $v$ is a source of $G(x)\setminus a$, we have $x_w=0$ for all $w\in\In^+(v)\setminus\{u\}$ and $x_w=1$ for all $w\in\In^-(v)\setminus\{u\}$. Since $z_u=x_u$, we deduce that $x\leq_v z$ and thus $f_v(x)\leq f_v(z)$. Since $f_v(x)=x_v=1$, we obtain $f_v(z)=1=x_v$, as required. If $x_v=0$ the proof is similar.  
\end{proof}

\begin{proof}[{\bf Proof of Theorem~\ref{uniqueness}}]
Let $x$ and $y$ be distinct fixed points of $f$, and let $I$ be the set of $v\in [n]$ such that $x_v\neq y_v$. By Lemma~\ref{alphabeta_lemma}, $G(x)[I]$ has a cycle $C$. It is thus sufficient to prove that $C$ has no special arc. Suppose, for a contradiction, that $C$ has a special arc $a=(uv,\e)$. By Lemma~\ref{canalizing}, 
\[
f_v(z)=x_v\text{ for all $z\in\B^n$ such that $z_u=x_u$}.
\]
Since $G(y)[I]=G(x)[I]$, by applying the same lemma, we get 
\[
f_v(z)=y_v\text{ for all $z\in\B^n$ such that $z_u=y_u$}.
\]
Since $x_v\neq y_v$ and $x_u\neq y_u$ we deduce that, for all $z\in\B^n$, $f_v(z)=x_v$ if and only if $z_u=x_u$. Hence, $f_v$ only depends on $x_u$, so $a$ is the unique in-coming arc of $v$ in $G$. Thus $v$ is a source of $G\setminus a$, and this contradicts the fact that $a$ is a special arc. Thus $C$ has no special arc. 
\end{proof}

Theorems~\ref{newrule1} and \ref{renewrule1} stated in the introduction are easy corollaries of Theorem~\ref{uniqueness}.

\begin{proof}[{\bf Proof of Theorems~\ref{newrule1} and  \ref{renewrule1}}] 
In Theorem~\ref{newrule1}, each positive cycle $C$ has an arc $a$ satisfying some conditions that trivially imply that $a$ is a special arc of $C$. In Theorem~\ref{renewrule1}, each positive cycle $C$ has a vertex $v$ satisfying some conditions that trivially imply that the arc $a$ of $C$ with terminal vertex $v$ is a special arc of $C$. Thus in both cases, every positive cycle has a special arc and, by Theorem~\ref{uniqueness}, $f$ has at most one fixed point. 
\end{proof}

\begin{remark}\label{rem:DR1}
{\em An easy corollary of Theorem~\ref{uniqueness} is the following: 
\begin{quote}
{\em If $G$ is strong, has a unique positive cycle $C$ and at least one negative cycle, then $f$ has at most one fixed point} (since every arc $a=(uv,\e)$ of $C$ such that $v$ is of in-degree at least two is then a special arc of $C$).
\end{quote}
This has been proved in \cite{DR12} under the additional assumptions that there exists a vertex $v$ meeting every cycle and that all the vertices $u\neq v$ have in-degree one.}
\end{remark}

\begin{remark}
{\em Aracena proved in \cite{A08} the following: 
\begin{quote}
{\em If $G$ has a non-trivial initial component without negative cycle, then $f$ has no fixed~point.} 
\end{quote}
This follows directly from Lemma~\ref{source_lemma}. Indeed, if $G$ has a non-trivial component $I$ and $f(x)=x$ then, by Lemma~\ref{source_lemma}, $G(x)[I]$ has no source and thus at least one cycle $C$, which is a positive cycle of $G[I]$. Furthermore, if $f$ canalizes no arc of $C$, then we deduce from Lemma~\ref{canalizing} that $C$ has no special arc. Hence, we get a weaker sufficient condition for the absence of fixed point (which however does not only depend on $G$):}
\begin{quote}
If $G$ has a non-trivial component $I$ in which every positive cycle has a special arc, and if $f$ canalizes no arc that belong to a positive cycle of $G[I]$, then $f$ has no fixed~point.
\end{quote}
\end{remark}

\begin{remark}
{\em The notion of special arc relies on connections between positive and negative cycles in the following sense:
\begin{quote}
{\em If $C$ has a special arc, then either $C$ intersects a negative cycle or $G$ has a non-trivial initial component with only negative cycles} (and thus $f$ has no fixed point by the theorem of Aracena mentioned just above). 
\end{quote}
Indeed, suppose that $C$ has a special arc $a=(uv,\e)$. Since $v$ is not a source of $G\setminus a$, $G$ has an arc $a'=(u'v,\e')\neq a$. If $u'$ and $C$ are not in the same  component, then $G$ has a path $P$ from an initial component $I$ to $v$ without vertex in $C\setminus v$. Since $a$ is a special arc of $C$, we deduce that $I$ is not a source and has no positive cycle. So $I$ is a non-trivial initial component of $G$ with only negative cycles. If $u'$ and $C$ are in the same component, then $G$ has a path $P$ from $v$ to $u'$, and thus $C'=P\cup a'$ is a cycle of $G\setminus a$ containing $v$. Thus $C'$ is negative (since $a$ is special) and $C'$ intersects $C$.}
\end{remark}

%%%%%%%%%%%%%%%%%%%%%%%%%%%%%%%%%%%%%%%%%%%%%%%%%%%%%%
\section{Upper-bounds}\label{sec:upperbound}
%%%%%%%%%%%%%%%%%%%%%%%%%%%%%%%%%%%%%%%%%%%%%%%%%%%%%%

Let $\tilde\tau^+$ be the minimum size of a set of vertices $I\subseteq [n]$ such that, in $G^I$, every positive cycle has a special arc. Let $\tilde g^+$ be the minimum length of a positive cycle of $G$ without special arc (with the convention that $\tilde g^+=\infty$ if such a cycle does not exist). Below, we prove that $2^{\tilde\tau^+}$ and $A(n,\tilde g^+)$ are upper bounds one the number of fixed points of $f$. Since we always have $\tilde\tau^+\leq \tau^+$ and $g^+\leq \tilde g^+$, these upper-bounds improve the upper bounds $2^{\tau^+}$ and $A(n,\tilde g^+)$ mentioned in the introduction. Actually, the gap can be arbitrarily large since if $G$ is as in Example \ref{ex1} then $\tilde\tau^+=0$ and $\tilde g^+=\infty$, so that $2^{\tilde \tau^+}=A(n,\tilde g^+)=1$, while both $2^{\tau^+}$ and $A(n,g^+)$ are exponential with $n$, as explained in Remark~\ref{rem:upperbound}. 

%%%%%%%%%%%%%%%%%%
\begin{corollary}\label{newupperbound}
$f$ has at most $\min(2^{\tilde \tau^+},A(n,\tilde g^+))$ fixed points. 
\end{corollary}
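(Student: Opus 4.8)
The plan is to establish the two bounds $2^{\tilde\tau^+}$ and $A(n,\tilde g^+)$ separately, since the asserted quantity is just their minimum, and to base both on Theorem~\ref{uniqueness}. It is convenient to record first the immediate consequence of that theorem that I will use twice: \emph{if every positive cycle of the interaction graph of a Boolean network $h$ has a special arc, then $h$ has at most one fixed point} (this is the contrapositive of Theorem~\ref{uniqueness}, which is valid for any network, not only for $f$).

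The bound $A(n,\tilde g^+)$ is the quicker half. First I would take two distinct fixed points $x$ and $y$ and apply Theorem~\ref{uniqueness} to obtain a positive cycle $C$ of $G$ with no special arc on which $x$ and $y$ differ at every vertex. By the definition of $\tilde g^+$ as the minimum length of a positive cycle without special arc, $C$ has at least $\tilde g^+$ vertices, so $d(x,y)\geq \tilde g^+$. Hence the fixed points of $f$ form a binary code of length $n$ and minimal distance at least $\tilde g^+$, and by the definition of $A(n,\tilde g^+)$ there are at most $A(n,\tilde g^+)$ of them.

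For the bound $2^{\tilde\tau^+}$, I would fix a minimum set $I\subseteq[n]$, so $|I|=\tilde\tau^+$, such that every positive cycle of $G^I$ has a special arc, and then show that the restriction map $x\mapsto x_I$ is injective on the fixed points of $f$; this injectivity immediately gives at most $2^{|I|}=2^{\tilde\tau^+}$ fixed points. So suppose for contradiction that $x$ and $y$ are distinct fixed points with $x_I=y_I$. The key construction is an auxiliary network $f'$ obtained by freezing the components in $I$ at their common value: set $f'_v:=f_v$ for $v\notin I$, and let $f'_v$ be the constant function with value $x_v=y_v$ for $v\in I$. Both $x$ and $y$ are then fixed points of $f'$, and they remain distinct.

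The crux, and the step I expect to require the most care, is to identify the interaction graph $G'$ of $f'$ with $G^I$. For $v\in I$ the function $f'_v$ is constant, so $v$ has no in-coming arc in $G'$, exactly as in $G^I$, where every arc with terminal vertex in $I$ has been removed. For $v\notin I$ we have $f'_v=f_v$, so the in-coming arcs of $v$, with their signs, are the same in $G'$ as in $G$; and since $v\notin I$ these arcs are untouched in passing from $G$ to $G^I$. Thus $G'=G^I$ exactly. Because every positive cycle of $G^I$ has a special arc, the consequence of Theorem~\ref{uniqueness} recorded above forces $f'$ to have at most one fixed point, contradicting the existence of the distinct fixed points $x$ and $y$. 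What makes this work is precisely that $G^I$ deletes the arcs \emph{into} $I$ rather than the vertices of $I$, which is exactly the effect of making the $I$-components constant, and that $x_I=y_I$ lets a single constant value be chosen so that both points stay fixed. Taking the minimum of the two bounds yields Corollary~\ref{newupperbound}.
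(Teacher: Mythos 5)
Your proof is correct and follows essentially the same route as the paper's: the $A(n,\tilde g^+)$ bound via Theorem~\ref{uniqueness} applied to a pair of distinct fixed points, and the $2^{\tilde\tau^+}$ bound via the frozen network $f'$ whose interaction graph is $G^I$, to which Theorem~\ref{uniqueness} is applied again. The only cosmetic difference is that you phrase the second half as injectivity of the restriction map $x\mapsto x_I$ on fixed points, whereas the paper phrases it as a pigeonhole contradiction from $|X|>2^{|I|}$; these are the same argument.
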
 
%%%%%%%%%%%%%%%%%%

\begin{proof}[{\bf Proof}]
If $x$ and $y$ are two distinct fixed points of $f$ then $C$ has a positive cycle without special arc such that $x_v\neq y_v$ for every vertex $v$ of $C$ (Theorem \ref{uniqueness}). Thus $\tilde g^+$ is at most the length of $C$, which is at most the Hamming distance between $x$ and $y$. Thus $f$ has indeed at most $A(n,\tilde g^+)$ fixed points. 

\medskip
Let us now prove that $f$ has at most $2^{\tilde\tau^+}$ fixed points. Let $I$ be a set of vertices of size $\tilde\tau^+$ such that, in $G^I$, every positive cycle has a special arc. Let $X$ be the set of fixed points of $f$ and suppose, for a contradiction, that $|X|>2^{|I|}$. Then the function from $X$ to $2^I$ that maps $x$ on $x_I$ is not an injection, thus there exists distinct $x,y\in X$ such that $x_I=y_I$. Let $f'$ be the $n$-component network defined by $f'_v=f_v$ for every $v\in [n]\setminus  I$ and $f'_v=\cst=x_v$ for every $v\in I$. Then $x$ and $y$ are fixed points of $f'$ and thus, by Theorem  \ref{uniqueness}, the interaction graph of $f'$, which is precisely $G^I$, has a positive cycle without special arc, a contradiction. Thus $|X|\leq 2^{|I|}$ as required. 
\end{proof}

%%%%%%%%%%%%%%%%%%%%%%%%%%%%%%%%%%%%%%%%%%%%%%%%%%%%%%%%%%%%%%%%%%%%
\section{Existence results}\label{sec:existence}
%%%%%%%%%%%%%%%%%%%%%%%%%%%%%%%%%%%%%%%%%%%%%%%%%%%%%%%%%%%%%%%%%%%%

We define a {\bf two-coloring} of $G$ as a point $x\in\B^n$ such that $G(x)=G$. In other words, regarding $x_v$ as the color of vertex $v$, $x$ is a two-coloring if all the negative arcs link vertices with the distinct colors, and all the positive arcs link vertices with the same color. If $G$ has only negative arcs, we then recover the usual notion of (proper) two-coloring for unsigned graphs. Obviously, $x$ is a two-coloring if and only if $\overline{x}$ is a two-coloring. 

\medskip
We denote by $G^*$ the signed digraph obtained from $G$ by adding an arc $(uv,\e)$ for every arc $(vu,\e)$ of $G$. Hence, $G^*$ can be regarded as the symmetric (or undirected) version of $G$. A well known theorem of Cartwright and Harary \cite{CH56} asserts the following: 

\begin{theorem}\label{thm_Harary}
$G^*$ has no negative cycle if and only if $G$ has a two-coloring.
\end{theorem}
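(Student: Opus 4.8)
The plan is to prove the two implications separately, the first being routine and the second resting on a single combinatorial lemma.

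First I would treat the easy direction: assume $G$ has a two-coloring $x$, so $G(x)=G$. By definition of $G(x)$, every positive arc $(uv,+)$ of $G$ satisfies $x_u=x_v$ and every negative arc $(uv,-)$ satisfies $x_u\neq x_v$. This color condition is symmetric in $u$ and $v$, and $G^*$ is obtained from $G$ by adding, for each arc $(vu,\e)$, the reversed arc $(uv,\e)$ of the \emph{same} sign; hence \emph{every} arc $(uv,\e)$ of $G^*$ satisfies $\e=+$ if and only if $x_u=x_v$. Now traverse any cycle $v_0\to v_1\to\cdots\to v_k=v_0$ of $G^*$: each negative arc flips the color and each positive arc preserves it, and since we return to the starting color $x_{v_0}$, the number of negative arcs traversed is even. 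Thus the cycle is positive, and $G^*$ has no negative cycle.

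For the converse, assume $G^*$ has no negative cycle; I would build a two-coloring by propagation. Since $G^*$ is symmetric, the reverse of every path of $G^*$ is again a path of $G^*$ of the same sign, so any two vertices in a common weakly connected component of $G$ are joined by a path of $G^*$ in either direction. I pick one vertex $r$ in each such component, set $x_r:=0$, and for every other vertex $w$ of that component define $x_w$ to be $0$ or $1$ according to whether some path of $G^*$ from $r$ to $w$ is positive or negative.

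The crux is well-definedness. Suppose two paths $P_1,P_2$ of $G^*$ from $r$ to $w$ have different signs. Concatenating $P_1$ with the reverse of $P_2$ (which lies in $G^*$ with the same sign, by symmetry) yields a closed walk based at $r$ whose sign is $\mathrm{sign}(P_1)\cdot\mathrm{sign}(P_2)=-$. I would then invoke the elementary lemma that every negative closed walk of a signed digraph contains a negative cycle, proved by induction on the length of the walk: if the walk is not already a cycle, some vertex repeats, splitting it into two shorter closed walks whose signs multiply to that of the whole, so one of them is negative, and the induction applies. The resulting negative cycle contradicts the hypothesis on $G^*$, so $x_w$ does not depend on the chosen path. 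I expect this lemma to be the only real obstacle; everything else is bookkeeping. Finally, to check $G(x)=G$, take any arc $(uv,\e)$ of $G$ and a path $P$ of $G^*$ from $r$ to $u$ of sign $s$; then $P$ followed by $(uv,\e)$ reaches $v$ with sign $s\cdot\e$, so by well-definedness $x_u$ is governed by $s$ and $x_v$ by $s\cdot\e$, whence $x_u\neq x_v$ if and only if $\e=-$. Thus positive arcs join equal colors and negative arcs join opposite colors, i.e. $x$ is a two-coloring.
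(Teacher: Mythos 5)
The paper does not actually prove this statement: it is quoted as a well-known theorem of Cartwright and Harary with a citation to \cite{CH56}, so there is no in-paper proof to compare against. Your argument is the standard propagation proof (the signed analogue of ``a graph is bipartite iff it has no odd cycle'') and it is essentially correct: the forward direction is a parity count along a cycle, and the converse rests on the lemma that a negative closed walk contains a negative cycle --- the very lemma the paper itself states without proof just before Lemma~\ref{graph} --- and your induction for it (split at a repeated vertex; the signs of the two pieces multiply to the sign of the whole) is sound. One step deserves tightening: in the final verification that $G(x)=G$ you append the arc $(uv,\e)$ to a path $P$ from $r$ to $u$, but the result is in general a walk, not a path (the vertex $v$ may already lie on $P$), whereas well-definedness of $x$ was established only for paths. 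The fix is immediate with the tools you already have: take paths $P_u$ and $P_v$ of $G^*$ from $r$ to $u$ and from $r$ to $v$, and note that the closed walk consisting of $P_u$, then the arc $(uv,\e)$, then the reverse of $P_v$ would be negative if $\e\neq s(P_u)\,s(P_v)$, hence would yield a negative cycle of $G^*$, a contradiction; therefore $x_u=x_v$ if and only if $\e=+$. With that patch (or by stating well-definedness for walks rather than paths, which the same closed-walk lemma gives at no extra cost), the argument is complete.
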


We will also use the following easy observation on the negative cycles of $G^*$ and $G$. 

\begin{lemma}\label{lem_GG*}
If $G$ is strong, then $G$ has a negative cycle if and only if $G^*$ has a negative cycle.
\end{lemma}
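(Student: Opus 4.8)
The statement is an equivalence, and one direction is immediate: since $A(G)\subseteq A(G^*)$ and the two graphs share the same vertex set, every cycle of $G$ is literally a cycle of $G^*$ with the same sign. Hence if $G^*$ has no negative cycle then neither does $G$, and this uses nothing about strongness. The plan is therefore to devote the work to the converse: assuming $G$ strong and free of negative cycles, I want to conclude that $G^*$ has no negative cycle. By Theorem~\ref{thm_Harary} this is equivalent to producing a two-coloring of $G$, so the whole problem reduces to exhibiting a point $x\in\B^n$ with $G(x)=G$.

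I would build such an $x$ by propagating signs along directed paths from a fixed root, reading the sign $+$ as $+1$ and $-$ as $-1$. Fix any vertex $r$, set $x_r=0$, and for each vertex $v$ use strong connectivity to select a directed path $W_v$ from $r$ to $v$; then define $x_v\in\B$ so that $(-1)^{x_v}$ equals the sign of $W_v$ (the product of the signs of the arcs of $W_v$). Two things must be verified: that $x_v$ does not depend on the chosen path, and that the resulting $x$ is genuinely a two-coloring.

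Both verifications rest on one principle: when $G$ has no negative cycle, every closed directed walk is positive. For independence of the path, if $W$ and $W'$ are two $r$-to-$v$ paths and $R$ is a $v$-to-$r$ path (again available since $G$ is strong), then traversing $W$ then $R$, and traversing $W'$ then $R$, gives two closed walks, each positive; writing out the products of signs forces $\mathrm{sign}(W)=\mathrm{sign}(R)=\mathrm{sign}(W')$. For the coloring property, take an arc $a=(uv,\e)$: traversing $W_u$ then $a$ is an $r$-to-$v$ walk of sign $(-1)^{x_u}\e$, which by the same principle must equal $(-1)^{x_v}$. This is exactly the requirement $x_u=x_v$ when $\e=+$ and $x_u\neq x_v$ when $\e=-$, i.e.\ the definition of a two-coloring. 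Feeding $x$ into Theorem~\ref{thm_Harary} then yields that $G^*$ has no negative cycle. (Alternatively, one can bypass Theorem~\ref{thm_Harary} and check directly that every arc $(uv,\e)$ of $G^*$, whether original or reversed, satisfies $\e=(-1)^{x_u+x_v}$, so that every cycle of $G^*$ has sign $(-1)^{2\sum x_{v_i}}=+$.)

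The one genuinely non-routine ingredient, and the step I expect to be the main obstacle, is the principle that a negative closed walk forces a negative \emph{simple} cycle, which is what justifies that every closed walk in $G$ is positive. I would prove it by choosing a negative closed walk of minimum length: if it repeats a vertex $w$, it splits at $w$ into two strictly shorter closed walks whose signs multiply to $-1$, so one of them is again a negative closed walk, contradicting minimality; hence a minimal negative closed walk is a simple cycle, contradicting the hypothesis. The only care needed elsewhere is to treat the sign of a walk as the product of the signs of its arcs counted with multiplicity, consistently with the paper's definition of the sign of a path or cycle.
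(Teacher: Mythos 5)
Your proof is correct. There is, however, no paper proof to compare it against: the paper introduces this lemma as an ``easy observation'' and never proves it, so your write-up fills a gap rather than duplicates an argument. Your route is the natural one given the surrounding text: reduce, via Theorem~\ref{thm_Harary}, to constructing a two-coloring of $G$, and build it by propagating signs along paths from a root, with strong connectivity supplying the paths and the absence of negative cycles supplying consistency. All the steps check out, including the parenthetical shortcut that verifies directly that every arc $(uv,\e)$ of $G^*$, original or reversed, satisfies $\e=(-1)^{x_u+x_v}$, so that every cycle of $G^*$ is positive. The one ingredient you rightly isolate as non-routine --- a negative closed walk forces a negative simple cycle --- is precisely the principle the paper also states without proof in Section~\ref{sec:existence} (``if $W$ is negative, then $P_1\cup\cdots\cup P_k$ has a negative cycle''), and your minimal-length splitting argument is a correct proof of it: a shortest negative closed walk cannot revisit a vertex, since splitting at a repeated vertex yields two strictly shorter closed walks whose signs multiply to $-1$, one of which must be negative.

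For comparison, there is a slightly more economical route that bypasses the two-coloring entirely: given a negative cycle of $G^*$, replace each reversed arc $(uv,\e)$ (arising from an arc $(vu,\e)$ of $G$) by a directed $u$-to-$v$ path of $G$, which exists by strongness and has sign $\e$ because concatenating it with $(vu,\e)$ gives a closed walk of $G$, necessarily positive. This turns the negative cycle of $G^*$ into a negative closed walk of $G$, and the same walk-to-cycle principle finishes. Both arguments rest on exactly the same two pillars (strongness for path availability, and the walk-to-cycle principle), so the difference is cosmetic; yours has the added benefit of making the connection to Theorem~\ref{thm_Harary} explicit.
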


Below are two easy applications of the fact that $f_v$ is non-decreasing with $\leq_v$.

%%%%%%%%%%%%%%%%%%
\begin{lemma}\label{lem_ex_1}
Let $v$ be a vertex of $G$ of in-degree at least one and $x\in\B^n$. If all the in-coming arcs of $v$ are in $G(x)$ then $f_v(x)=x_v$. 
\end{lemma}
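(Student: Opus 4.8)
The plan is to observe that the hypothesis forces $x$ to be an extremal point for the partial order $\leq_v$, and then to combine the monotonicity of $f_v$ (Lemma~\ref{local_lemma}) with the fact that a vertex of positive in-degree cannot have a constant component function.

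First I would translate the hypothesis into conditions on the coordinates of $x$ at the in-neighbors of $v$. Saying that an in-coming arc $(uv,+)$ lies in $G(x)$ means $x_u=x_v$, and saying that $(uv,-)$ lies in $G(x)$ means $x_u\neq x_v$. So, if all in-coming arcs of $v$ belong to $G(x)$, then $x_u=x_v$ for every $u\in\In^+(v)$ and $x_u\neq x_v$ for every $u\in\In^-(v)$. Suppose $x_v=1$. Then $x_u=1$ for all $u\in\In^+(v)$ and $x_u=0$ for all $u\in\In^-(v)$, so for every $y\in\B^n$ we have $y_{\In^+(v)}\leq x_{\In^+(v)}$ and $y_{\In^-(v)}\geq x_{\In^-(v)}$, that is $y\leq_v x$; thus $x$ is maximal for $\leq_v$. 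Dually, if $x_v=0$ then $x_u=0$ for all $u\in\In^+(v)$ and $x_u=1$ for all $u\in\In^-(v)$, so $x\leq_v y$ for every $y$ and $x$ is minimal for $\leq_v$.

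Next, I would apply Lemma~\ref{local_lemma}. If $x_v=1$ then $f_v(y)\leq f_v(x)$ for all $y$, so $f_v(x)=\max_{y\in\B^n}f_v(y)$; if $x_v=0$ then $f_v(x)=\min_{y\in\B^n}f_v(y)$. It remains to identify these extrema, and this is where the hypothesis that $v$ has in-degree at least one enters: it provides an in-coming arc $(uv,\e)$, and by the very definition of the interaction graph $f_{vu}(z)\neq 0$ for some $z\in\B^n$, so $f_v$ is non-constant and hence takes both values $0$ and $1$. Consequently $\max_{y}f_v(y)=1$ and $\min_{y}f_v(y)=0$, which gives $f_v(x)=1=x_v$ in the first case and $f_v(x)=0=x_v$ in the second.

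The argument is essentially a dual of the one used for Lemma~\ref{source_lemma} (where the \emph{absence} of in-coming arcs in $G(x)$ makes $x$ extremal in the opposite direction), so I expect no real difficulty. The one point deserving care is the role of the in-degree hypothesis: it is exactly what guarantees that $f_v$ is non-constant, without which the extremum could be the wrong constant. A minor bookkeeping remark is that if $v$ had a common in-neighbor $u\in\In^+(v)\cap\In^-(v)$, the hypothesis would demand both $x_u=x_v$ and $x_u\neq x_v$ and so could not hold; thus one may freely assume $\In^+(v)\cap\In^-(v)=\emptyset$ at $v$, and the extremality computation above is unaffected.
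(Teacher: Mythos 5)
Your proof is correct and follows essentially the same route as the paper: the hypothesis makes $x$ extremal for $\leq_v$, Lemma~\ref{local_lemma} then pins $f_v(x)$ to an extremum of $f_v$, and the in-degree hypothesis (equivalently, non-constancy of $f_v$) identifies that extremum as $x_v$. The only difference is cosmetic: the paper phrases the last step as a contradiction ($f_v$ constant would make $v$ a source of $G$), whereas you argue directly from the definition of the interaction graph that $f_v$ takes both values.
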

%%%%%%%%%%%%%%%%%%

\begin{proof}[{\bf Proof}]
Suppose that all the in-coming arcs of $v$ are in $G(x)$. If $x_v=0$, then $x_u=0$ for all $u\in\In^+(v)$ and $x_u=1$ for all $u\in\In^-(v)$. Hence, for all $y\in\B^n$, we have $x\leq_v y$ and thus $f_v(x)\leq f_v(y)$. We deduce that if $f_v(x)=1$, then $f_v(y)=1$ for all $y\in\B^n$. But then $f_v$ is a constant and this is equivalent to say that $v$ is a source of $G$, a contradiction. Therefore $f_v(x)=0=x_v$ as required. If $x_v=1$ the proof is similar. 
\end{proof}

%%%%%%%%%%%%%%%%%%
\begin{lemma}\label{lem_ex_2}
Let $a=(uv,\e)$ be an arc of $G$, and let $x\in\B^n$ such that $f_v(x)\neq x_v$. Suppose that all the in-coming arcs of $v$ distinct from $a$ are in $G(x)$. Then $f_v(z)\neq x_v$ for all $z\in\B^n$ such that $z_u=x_u$. Furthermore, $x_u\neq x_v$ if and only if $\e=+$. 
\end{lemma}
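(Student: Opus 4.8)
The plan is to exploit the monotonicity of $f_v$ with respect to $\leq_v$ (Lemma~\ref{local_lemma}), exactly as in the proofs of Lemmas~\ref{lem_ex_1} and \ref{canalizing}, but now localizing the argument to a single variable $x_u$. First I would split on the value of $\e$ and on the value of $x_v$. The key observation is that the hypothesis ``all in-coming arcs of $v$ distinct from $a$ are in $G(x)$'' pins down the values $x_w$ for every $w\in\In^+(v)\setminus\{u\}$ and every $w\in\In^-(v)\setminus\{u\}$ as a function of $x_v$: if $x_v=1$ then $x_w=1$ for positive in-neighbors and $x_w=0$ for negative in-neighbors (and symmetrically if $x_v=0$). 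In other words, every in-neighbor except possibly $u$ is already set to the value that pushes $f_v$ toward $x_v$.

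With this in hand, the main step is to show that $f_v(z)$ depends only on $z_u$ among all the neighbors, in the sense that flipping $z_u$ is the only way to change $f_v$ away from $x_v$. Concretely, suppose $x_v=1$ (the case $x_v=0$ being symmetric). Consider any $z$ with $z_u=x_u$. I would compare $z$ with the point $x'$ that agrees with $x$ on $\In(v)$: since all neighbors but $u$ are set favorably and $z_u=x_u$, one checks that either $x'\leq_v z$ or $z\leq_v x'$ depending on signs, and monotonicity together with $f_v(x)\neq x_v=1$, i.e. $f_v(x)=0$, forces $f_v(z)=0\neq x_v$. The point is that if $f_v(z)$ could equal $x_v$ while $z_u=x_u$, then by Lemma~\ref{local_lemma} we would be able to propagate that value back to $x$ (since only the $u$-coordinate is free and it is held fixed), contradicting $f_v(x)\neq x_v$. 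This is essentially the contrapositive of the canalization argument used in Lemma~\ref{canalizing}, and I expect it to be the delicate step, since one must correctly track whether $u\in\In^+(v)$ or $u\in\In^-(v)$ and set up the $\leq_v$ comparison so that the fixed coordinate $z_u$ does not obstruct the inequality.

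For the final assertion, that $x_u\neq x_v$ if and only if $\e=+$, I would argue by contradiction using the discrete derivative. Since $f_v(x)\neq x_v$, the arc $a=(uv,\e)$ must be ``active'' at $x$: flipping $x_u$ must be capable of restoring the fixed-point value, for otherwise every in-coming arc would effectively be in $G(x)$ and Lemma~\ref{lem_ex_1} would give $f_v(x)=x_v$. Thus the arc $a$ is the unique in-coming arc of $v$ not lying in $G(x)$, which by the definition of $G(x)$ means precisely that the relation between $x_u$ and $x_v$ violates the coloring condition for sign $\e$: a positive arc lies in $G(x)$ iff $x_u=x_v$, so $a=(uv,+)\notin G(x)$ iff $x_u\neq x_v$, and symmetrically a negative arc lies in $G(x)$ iff $x_u\neq x_v$, so $a=(uv,-)\notin G(x)$ iff $x_u=x_v$. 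Combining these two equivalences yields exactly $x_u\neq x_v \iff \e=+$. The only care needed here is to first establish that $a\notin G(x)$, which follows from $f_v(x)\neq x_v$ and Lemma~\ref{lem_ex_1} applied to the hypothesis that all other in-coming arcs are in $G(x)$.
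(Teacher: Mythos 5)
Your proposal is correct and takes essentially the same route as the paper's proof: the first assertion via monotonicity of $f_v$ with respect to $\leq_v$ (Lemma~\ref{local_lemma}), comparing $x$ with any $z$ satisfying $z_u=x_u$ once the hypothesis pins down every in-neighbor of $v$ other than $u$, and the second assertion by noting that $a\notin G(x)$ (else Lemma~\ref{lem_ex_1} would force $f_v(x)=x_v$) and then reading off the sign condition from the definition of $G(x)$. The only cosmetic difference is that the paper writes the comparison directly as $x\leq_v z$ (in its chosen case $x_v=0$) rather than through your auxiliary point $x'$, which is redundant since $\leq_v$ only constrains coordinates in $\In(v)$.
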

%%%%%%%%%%%%%%%%%%

\begin{proof}[{\bf Proof}]
If $x_v=0$, then $x_w=0$ for all $w\in\In^+(v)\setminus \{u\}$ and $x_w=1$ for all $w\in\In^-(v)\setminus\{u\}$. Hence, for all $z\in\B^n$ such that $z_u=x_u$, we have $x\leq_v z$ and thus $x_v<f_v(x)\leq f_v(z)$. Therefore, $f_v(z)\neq x_v$ as required. If $x_v=0$ the proof is similar. Furthermore, $a$ is not in $G(x)$, since otherwise $f_v(x)=x_v$ by Lemma \ref{lem_ex_1}. Thus $x_u\neq x_v$ if $\e=+$ and $x_u=x_v$ otherwise. 
\end{proof}

We are now in position to prove Theorem~\ref{newrule2}, that we restate from the introduction. 

\setcounter{theorem}{4}

\begin{theorem}
If every negative cycle of $G$ has an arc $a=(uv,\epsilon)$ such that $G\setminus a$ has a non-trivial initial strong component containing $v$ and only positive cycles, then $f$ has at least one fixed point.
\end{theorem}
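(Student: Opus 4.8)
The plan is to argue by strong induction on $n$, the engine being the following \emph{stabilization principle}: if $J$ is an initial strong component of $G$ that is non-trivial and contains only positive cycles, then $G[J]$ is strong without negative cycle, so by Lemma~\ref{lem_GG*} and Theorem~\ref{thm_Harary} it admits a two-coloring $z$; since $J$ is initial every $v\in J$ has all its in-neighbors in $J$, so for $x_J=z$ all in-coming arcs of $v$ lie in $G(x)$ and Lemma~\ref{lem_ex_1} gives $f_v(x)=x_v$ for every extension $x$ of $z$. Thus an initial positive-only component can always be frozen to a partial fixed point, independently of the remaining coordinates. The base case $n=0$ is vacuous, and the case where $G$ is strong without negative cycle is immediate (a two-coloring of $G$ is then a genuine fixed point, again by Lemma~\ref{lem_ex_1}).

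First I would dispose of the case where $G$ is \emph{not} strong. Pick an initial strong component $J\subsetneq[n]$. Since $J$ is initial, $f|_J$ is an autonomous sub-network whose interaction graph is $G[J]$, and $G[J]$ again satisfies the hypothesis: any negative cycle of $G[J]$ is a negative cycle of $G$, the arc $a$ the hypothesis attaches to it yields an initial positive-only component $J_a\ni v$ of $G\setminus a$ with $J_a\subseteq J$, and $J_a$ stays initial and positive-only in $G[J]\setminus a=(G\setminus a)[J]$. By induction $f|_J$ has a fixed point $\bar x_J$; freezing $x_J:=\bar x_J$ yields a reduced network $f'$ on $[n]\setminus J$. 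The key point is that $f'$ still satisfies the hypothesis: a negative cycle of the interaction graph $G'$ of $f'$ is a negative cycle of $G$ avoiding $J$, the attached arc $a=(u'w')$ has both ends outside $J$, and its component $J_a$ is disjoint from $J$; crucially, every vertex of $J_a$ has all its in-neighbors in $J_a\cup\{u'\}\subseteq[n]\setminus J$, so no function $f_q$ with $q\in J_a$ depends on $x_J$, whence $(G'\setminus a)[J_a]=(G\setminus a)[J_a]$ is strong positive-only and $J_a$ is still initial in $G'\setminus a$. Induction applied to $f'$ gives a fixed point $x'$, and $(\bar x_J,x')$ is a fixed point of $f$.

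This leaves the genuine case: $G$ strong with at least one negative cycle $C$. Let $a=(uv,\e)$ be the arc attached to $C$ and let $J_a\ni v$ be the associated non-trivial initial positive-only strong component of $G\setminus a$. Two-coloring $J_a$ and freezing $J_a\setminus\{v\}$ stabilizes all of $J_a$ except possibly $v$, whose only unfrozen in-neighbor is $u$ across the arc $a$; by Lemma~\ref{lem_ex_1} the coloring completes to a fixed point as soon as $a\in G(x)$, while Lemma~\ref{lem_ex_2} controls the opposite situation and pins down the sign relation between $x_u$ and $x_v$. When $J_a\subsetneq G$ one removes $J_a\setminus\{v\}$ and recurses, keeping $v$ as a single-input buffer; when $J_a=G$ (equivalently $u\in J_a$) this says exactly that $G\setminus a$ is strong positive-only and every negative cycle of $G$ runs through $a$.

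The hard part will be precisely this last, strong case with a negative feedback arc: completing the two-coloring of $G\setminus a$ across $a$ to an honest fixed point, when neither the coloring nor its complement satisfies $a$ (satisfaction of $a$ being invariant under global complementation). I expect to resolve it by splitting on canalization: if $f$ canalizes an arc of a negative cycle, the canalized arc lets me force and freeze a coordinate, strictly reducing $n$ and invoking the induction hypothesis; in the non-canalized situation I would argue in the spirit of Theorem~\ref{strong}, first peeling the remaining negative cycles through their own attached arcs (an induction on the number of negative cycles) so as to reach the configuration of a single negative cycle to which Theorem~\ref{strong} applies directly, yielding even two fixed points at Hamming distance $n$. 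The remaining bookkeeping — checking that each reduction, in particular the one retaining the buffer vertex $v$ with its single in-arc, preserves the hypothesis — is routine but is where most of the case analysis will be spent.
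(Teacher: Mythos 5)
Your base cases and your reduction for non-strong $G$ are sound, but they are sound for a reason worth noticing: there the relevant components ($J$, and the components $I_b$ attached to negative cycles avoiding $J$) are strong components of $G$ itself, or of $G\setminus b$ with both endpoints of $b$ outside $J$, so they are automatically disjoint from the frozen set and survive the freeze intact. The genuine gap is exactly the step you dismiss as ``routine bookkeeping'': when $G$ is strong and $J_a\subsetneq[n]$, freezing $J_a\setminus\{v\}$ does \emph{not} in general preserve the hypothesis. A negative cycle $D$ that avoids the frozen set and survives in the reduced graph carries, via the hypothesis on $G$, an arc $b$ and a component $I_b$ of $G\setminus b$; but $J_a$ is a strong component of $G\setminus a$, not of $G$, so $I_b$ can overlap $J_a$ arbitrarily, and after freezing, $I_b$ intersected with the remaining vertices need not be strong, nor initial, nor non-trivial --- $D$ may then have no valid arc at all in the reduced graph, and the induction hypothesis cannot be invoked. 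The paper closes precisely this hole with a maximality argument that your proposal lacks: among all arcs attached to all negative cycles, it picks one whose component $I$ is maximal; then for any negative cycle $C'$ of $G^I$ with attached arc $a'=(u'v',\epsilon')$ and component $I'$, if $I\cap I'\neq\emptyset$ then $G[I\cup I']\setminus a'$ is strong (since $v'\notin I$), which forces $I\subsetneq I'$ and contradicts maximality; hence $I'$ is disjoint from $I$ and is still a valid initial component after the freeze. Without this (or some equivalent device) your recursion collapses at its first genuine step.

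Two further problems. First, your single-coloring recursion with $v$ kept as a ``buffer'' gives no control on the value the recursion assigns to $v$: if the returned fixed point has $x''_v\neq\chi_v$, then the vertices of $J_a\setminus\{v\}$ that have $v$ as in-neighbor are no longer stabilized, and your extension is not a fixed point of $f$. The paper's remedy is to freeze \emph{all} of $I$ twice, to $\chi$ and to $\overline{\chi}$, obtain two candidates $x$ and $y$, and show that if both fail then Lemma~\ref{lem_ex_2}, applied at $x$ and at $y$ (using $x_v\neq y_v$, hence $x_u\neq y_u$ by the sign conclusion of that lemma), forces $f_v$ to depend on $x_u$ alone, so $v$ has in-degree one in $G$, contradicting the non-triviality of $I$. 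You deploy this pair-of-colorings idea only in the case $J_a=[n]$ --- where, with it, the proof is already finished; your additional detour there through canalization splitting and Theorem~\ref{strong} is both unnecessary and unavailable (Theorem~\ref{strong} requires a unique negative cycle and no canalized arc on it, neither of which you have, and ``peeling'' the remaining negative cycles runs into the same preservation problem as above). So your assessment of the difficulty is inverted: the case you call hard is immediate, and the case you call routine is where the real work lies. Note finally that the paper's induction is on the number of negative cycles (the operation $G^I$, which deletes every arc entering $I$, destroys $C$ and creates no new cycle), not on $n$; combined with maximality this is what makes the hypothesis transfer clean.
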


\begin{proof}[{\bf Proof}]
We proceed by induction on the number of negative cycles. If $G$ has no negative cycle, then $f$ has at least one fixed point by Theorem \ref{rule2}. So suppose that $G$ has at least one negative cycle $C$ and satisfies the condition of the theorem. By hypothesis, $C$ has an arc $a=(uv,\epsilon)$ such that $G\setminus a$ has a non-trivial initial component $I$ containing $v$ and only positive cycles. Suppose that $I$ is maximal in the following sense: for every arc $a'=(u'v',\epsilon')$ that belongs to a negative cycle and such that $G\setminus a'$ has a non-trivial initial component $I'$ containing $v'$ and only positive cycles, $I$ is not a strict subset of $I'$. 

\medskip
Let us prove that \[
\text{\em $G^I$ satisfies the condition of the theorem.}
\]
Let $C'$ be a negative cycle of $G^I$ and let $a'=(u'v',\epsilon')$ be an arc of $C'$ such that $G\setminus a'$ has a non-trivial initial component $I'$ containing $v'$ and only positive cycles. It is sufficient to prove that $I'$ is an initial component of $G^I\setminus a'$. Since $G[I]$ is strong and $v'\not\in I$ (because $C'$ and $I$ are disjoint), $G[I]\setminus a'$ is strong. Thus if $I\cap I'\neq\emptyset$, then $G[I\cup I']\setminus a'$ is strong, and since $I'$ is an initial component of $G\setminus a'$ we deduce that $I\subseteq I'$. Since $v'\in I'\setminus I$, this contradicts the maximality of $I$. Thus $I\cap I'=\emptyset$ and it is then straightforward to show that $I'$ is an initial component of $G^I\setminus a'$. So $G^I$ satisfies the condition of the theorem, as required. 

\medskip
Since $G[I]\setminus a$ is strong and has no negative cycle, it follows that $(G[I]\setminus a)^*$ has no negative cycle (by Lemma~\ref{lem_GG*}), and thus it has a two-coloring $\chi\in \B^I$ (by Theorem~\ref{thm_Harary}). Consider the $n$-component networks $\tilde f$ and $\hat f$ defined as follows: 
\[
\begin{array}{c}
\left\{
\begin{array}{ll}
\tilde f_w=\cst=1&\text{for all }w\in I\text{ with }\chi_w=1\\
\tilde f_w=\cst=0&\text{for all }w\in I\text{ with }\chi_w=0\\
\tilde f_w=f_w&\text{for all }w\notin I\\
\end{array}
\right.
\\[8mm]
\left\{
\begin{array}{ll}
\hat f_w=\cst=1&\text{for all }w\in I\text{ with }\chi_w=0\\
\hat f_w=\cst=0&\text{for all }w\in I\text{ with }\chi_w=1\\
\hat f_w=f_w&\text{for all }w\notin I.\\
\end{array}
\right.
\end{array}
\]
Since $G^I$ is the interaction graph of $\tilde f$ and $\hat f$, and since $G^I$ satisfies the condition of the theorem, by induction hypothesis $\tilde f$ has a fixed point $x$ and $\hat f$ has a fixed point $y$. Obviously we have $x_I=\chi$ and $y_I=\overline{\chi}$.  

\medskip
Furthermore, 
\begin{equation}\label{eq:ab}
\forall w\neq v,\qquad f_w(x)=x_w\quad\text{and}\quad f_w(y)=y_w.
\end{equation}
Indeed, if $w\notin I$ then $f_w(x)=\tilde f_w(x)=x_w$. Suppose now that $w\in I$. Since $I$ is a non-trivial initial component of $G\setminus a$ and $w\neq v$, all the in-coming arcs of $w$ in $G$ are in $G[I]\setminus a$. Since $x_I=\chi$ is a two-coloring of $G[I]\setminus a$, we have $G[I]\setminus a=G[I](x)\setminus a$, and we deduce that all the in-coming arcs of $w$ are in $G(x)$. Hence, by Lemma~\ref{lem_ex_1}, $f_w(x)=x_w$. We prove with similar arguments that $f_w(y)=y_w$ for all $w\neq v$, using the fact that $y_I=\overline{\chi}$ is also a two-coloring of $G[I]\setminus a$. This proves (\ref{eq:ab}). 

\medskip
Let us now prove that either $x$ or $y$ is a fixed point of $f$. Suppose, for a contradiction, that $f(x)\neq x$ and $f(y)\neq y$. Then, according to (\ref{eq:ab}), we have $f_v(x)\neq x_v$, and since $I$ is an initial component of $G\setminus a$, all the in-coming arcs of $v$ distinct from $a$ are in $G[I]\setminus a$. Since $x_I=\chi$ we have $G[I](x)\setminus a=G[I]\setminus a$ and we deduce that all the in-coming arcs of $v$ distinct from $a$ are in $G(x)$. Thus, according to Lemma~\ref{lem_ex_2}, we have 
\[
f_v(z)\neq x_v\text{ for all $z\in\B^n$ such that $z_u=x_u$}. 
\]
We prove with similar arguments that 
\[
f_v(z)\neq y_v\text{ for all $z\in\B^n$ such that $z_u=y_u$}. 
\]
Since $x_v\neq y_v$ we have $x_u\neq y_u$ and we deduce that, for all $z\in\B^n$, $f_v(z)=x_v$ if and only if $z_u\neq x_u$. Thus $f_v(x)$ only depends on $x_u$, and thus $a$ is the only arc of $G$ with terminal vertex~$v$. But then $I$ cannot be a non-trivial component of $G\setminus a$, a contradiction. Thus $x$ or $y$ is a fixed point of $f$. 
\end{proof}

\begin{remark}\label{rem:DR2}
{\em An easy corollary of Theorem~\ref{newrule2} is the following: 
\begin{quote}
{\em If $G$ is strong, has a unique negative cycle $C$ and at least one positive cycle, then $f$ has at least one fixed point} (since every arc $a=(uv,\e)$ of $C$ such that $v$ is of in-degree at least two satisfies the condition of Theorem~\ref{newrule2}).
\end{quote}
This has been proved in \cite{DR12} under the additional assumptions that there exists a vertex $v$ meeting every cycle, and that all the vertices $u\neq v$ have in-degree one.}
\end{remark}

\begin{remark}
{\em Aracena proved in \cite{A08} the following theorem: 
\begin{quote}
{\em If $G$ is strong and has no negative cycle, then $f$ has two fixed points with Hamming distance $n$.
} 
\end{quote}
This follows directly from Harary's theorem and Lemma~\ref{lem_ex_1}. Indeed, if $G$ is strong and has no negative cycle, then $G$ has a two-coloring $x$ (by Theorem~\ref{thm_Harary} and Lemma \ref{lem_GG*}), and if $G$ has no source, then $f(x)=x$ by Lemma~\ref{lem_ex_1}; and since $\overline{x}$ is also a two-coloring, we also get $f(\overline{x})=\overline{x}$. From this result, we easily deduce a statement with a weaker condition and a weaker conclusion: 
\begin{quote}
{\em If $G$ has no negative cycle and a non-trivial initial component, then $f$ has at least two fixed points}. 
\end{quote}
Now, in the proof of Theorem~\ref{newrule2}, we show that if $f(x)\neq x$ or $f(y)\neq y$ then the arc $a$ is canalized by~$f$, and thus we get the following generalization (which however does not only depend on $G$):} 
\begin{quote}
If $G$ satisfies the condition of Theorem~\ref{newrule2} and has a non-trivial initial component $I$ such that $f$ canalizes no arc that belongs to a negative cycle of $G[I]$, then $f$ has at least two fixed points.
\end{quote}
\end{remark}

\begin{remark}
{\em A {\em kernel} in a digraph $D=(V,A)$ is an independent set of vertices $K\subseteq V$ such that, for every $v\in V\setminus K$, $D$ has an arc from $v$ to $K$. Not every digraph has a kernel, and the following well known theorem of Richardson \cite{R53} asserts that 
\begin{quote}
{\em If $D$ has no odd cycle, then $D$ has at least one kernel}. 
\end{quote}
Here, the parity of a cycle is the parity of its length. Many generalizations of this results have been established, see for instance \cite{GSNL84,GSLS15} and the references therein. In \cite{RR13} a correspondence with Boolean networks shows that Richardson's theorem is a corollary of Theorem~\ref{rule2}, and using this correspondence, we deduce from Theorem~\ref{newrule2} a new generalization of Richardson's theorem:}
\begin{quote}
If every odd cycle of $D$ has an arc $uv$ such that $D\setminus uv$ has a non-trivial initial component containing $v$ and only even cycles, then $D$ has at least one kernel.
\end{quote}
\end{remark}

We now go to the proof of Theorem~\ref{strong}, already stated in the introduction, that gives a new sufficient condition for the existence of two fixed points with Hamming distance $n$, using again information on the arcs canalized by $f$. 

\begin{theorem}
If $G$ is strongly connected, has a unique negative cycle and at least one positive cycle, and if $f$ canalizes no arc that belongs to the negative cycle, then $f$ has two fixed points with Hamming distance $n$. 
\end{theorem}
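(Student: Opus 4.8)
The plan is to produce the two fixed points as a complementary pair $x,\overline{x}$ arising from a single two-coloring. Concretely, I would first search for an arc $a=(uv,\e)$ of the unique negative cycle $C$ such that $G\setminus a$ admits a two-coloring $x\in\B^n$, that is, a point with $G(x)=G\setminus a$. Since every cycle of $G(x)$ is positive whereas $C$ is negative, such an $x$ can never satisfy $a$, so we automatically get $G(x)=G\setminus a$ (every arc but $a$ is satisfied). The reason for deleting an arc \emph{on} $C$ and insisting on a two-coloring of the \emph{whole} $G\setminus a$ is that the two candidate fixed points will then be $x$ and $\overline{x}$, which differ in all $n$ coordinates; this is exactly what yields Hamming distance $n$, rather than the weaker ``at least two fixed points'' that the machinery of Theorem~\ref{newrule2} applied to a proper initial component would provide.

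Granting such an $x$, the verification that $f(x)=x$ mirrors the argument behind Aracena's strong/acyclic case. As $G$ is strong with $n\ge 2$, every vertex has in-degree at least one. For every $w\neq v$, all in-coming arcs of $w$ lie in $G(x)=G\setminus a$ (the only missing arc $a$ has terminal $v$), so Lemma~\ref{lem_ex_1} gives $f_w(x)=x_w$. For the vertex $v$, suppose $f_v(x)\neq x_v$; all in-coming arcs of $v$ other than $a$ are in $G(x)$, so Lemma~\ref{lem_ex_2} yields $f_v(z)\neq x_v$ for every $z$ with $z_u=x_u$, and the parity clause of that lemma (namely $x_u\neq x_v\iff \e=+$) turns this statement into exactly the statement that $a$ is canalized by $f$. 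Since $a\in C$ and $f$ canalizes no arc of $C$, this is impossible; hence $f_v(x)=x_v$ and $f(x)=x$. Because $G(\overline{x})=G(x)=G\setminus a$, the identical argument gives $f(\overline{x})=\overline{x}$, so $x$ and $\overline{x}$ are two fixed points at Hamming distance $n$.

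The main obstacle is the first step: proving that some arc $a\in C$ can be deleted so that $G\setminus a$ is balanced, i.e. that $(G\setminus a)^*$ has no negative cycle (equivalently, by Theorem~\ref{thm_Harary}, that $G\setminus a$ has a two-coloring). The difficulty is that $G\setminus a$ need not be strongly connected, so Lemma~\ref{lem_GG*} does not apply globally, although it does apply to each strong component of $G\setminus a$: each such component is strong and, by uniqueness of $C$, free of negative cycles, hence balanced. The task is therefore to glue these component-wise two-colorings across the condensation of $G\setminus a$ into a global one, leaving exactly one arc of $C$ violated. My plan to do this again exploits the uniqueness of $C$: given component-wise two-colorings, a \emph{within}-component arc of $C$ that were violated would, together with a return path inside its (strong) component, close up into a directed negative cycle distinct from $C$, which is impossible; hence every violated $C$-arc must run between two distinct components. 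One then argues, using strong connectivity of $G$ to complete any component-crossing negative cycle of $G^*$ into a directed closed walk, that all negative cycles of $G^*$ in fact share a common arc lying on $C$, and deleting that arc balances $G\setminus a$.

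I expect the real work to lie in making this last common-arc property precise: reconciling the undirected balance condition governing $G^*$ with the essentially directed hypothesis of a \emph{unique} negative cycle is the delicate point, and it is there that Theorem~\ref{thm_Harary} and Lemma~\ref{lem_GG*} have to be combined with the uniqueness of $C$. Once the balanced arc $a$ is secured, the rest of the proof is the routine bookkeeping of the second paragraph.
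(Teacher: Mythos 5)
Your second paragraph---the verification that a two-coloring $x$ of $G\setminus a$, where $a$ is an arc of the unique negative cycle $C$, yields the two fixed points $x$ and $\overline{x}$ via Lemmas~\ref{lem_ex_1} and~\ref{lem_ex_2} together with the no-canalization hypothesis---is correct and is exactly how the paper concludes. The genuine gap is in your first step: you never identify \emph{which} arc of $C$ to delete, and your gluing strategy fails for an arbitrary arc of $C$. Concretely, take $V(G)=\{1,2\}$ with arcs $(12,+)$, $(21,+)$ and $(21,-)$: this $G$ is strong, has the unique negative cycle $C$ formed by $(12,+)$ and $(21,-)$, and one positive cycle. If you delete the arc $a=(12,+)$ of $C$, then $G\setminus a$ consists of the two parallel arcs $(21,+)$ and $(21,-)$; both strong components of $G\setminus a$ are trivial, hence trivially balanced, yet no global two-coloring exists, since the two inter-component arcs force both $x_1=x_2$ and $x_1\neq x_2$. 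So component-wise colorings cannot always be glued, and the ``common arc of all negative cycles of $G^*$'' that you postulate is precisely the statement requiring proof; your plan of completing undirected negative cycles of $G^*$ into directed closed walks does not deliver it, because the sign of a replacing directed path need not agree with the sign of the reversed arc it replaces---exactly the directed-versus-undirected discrepancy you yourself flag as ``the real work.''

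What fills this gap in the paper is a graph-theoretic fact absent from your proposal: Lemma~\ref{graph}, asserting that $C$ has an arc lying on \emph{no positive cycle}; its proof, a delicate closed-walk argument, is the bulk of the paper's proof of this theorem. Choosing such an arc $a=(uv,\e)$ (shifted along $C$ so that $v$ has in-degree at least two), $C$ becomes the \emph{unique} cycle of $G$ through $a$. This forces the condensation of $G\setminus a$ to be a directed path $I_1,\dots,I_k$ with $v\in I_1$, $u\in I_k$, and exactly one arc between consecutive components: two arcs leaving some $I_p$ would give two distinct $v$--$u$ paths, hence two distinct cycles through $a$. Each $I_p$ is balanced by Lemma~\ref{lem_GG*} (no negative cycle survives in $G\setminus a$), and any cycle of $(G\setminus a)^*$ not contained in a single component must use a connecting arc $a_p$ together with its reverse, hence is positive; so $(G\setminus a)^*$ is balanced and Theorem~\ref{thm_Harary} provides the two-coloring you need. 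In the example above this selection rule picks $a=(21,-)$, the arc of $C$ on no positive cycle, and $G\setminus a$ is then indeed two-colorable. Without this selection criterion, or some substitute for Lemma~\ref{graph}, your argument cannot be completed.
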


We need the following lemma and few definitions. A {\em closed walk} $W$ in $G$ is a sequence of paths of $G$, say $W=(P_1,\dots,P_k)$, such that, for all $1\leq i<k$, the last vertex of $P_i$ is the first vertex of $P_{i+1}$, and such that the last vertex of $P_k$ is the first vertex of $P_1$. The sign of a walk is the product of the signs of its paths and is denoted $s(W)\in\{-1,1\}$. It is easy to see that {\em if $W$ is negative, then $P_1\cup\cdots\cup P_k$ has a negative cycle.} 

%%%%%%%%%%%%%%%%%%
\begin{lemma}\label{graph}
If $G$ has a unique negative cycle, then it has an arc that belongs to no positive~cycle. 
\end{lemma}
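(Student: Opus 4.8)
The plan is to argue by contradiction: assuming the conclusion fails, I will manufacture a \emph{second} negative cycle, contradicting uniqueness. Since every cycle of $G$ lies inside a single strong component, I may first replace $G$ by the strong component containing the negative cycle $C$; an arc of this component lying in no positive cycle of it lies in no positive cycle of $G$, so I assume $G$ strong. Write $C=v_0\to v_1\to\cdots\to v_{k-1}\to v_0$ with arcs $a_i=(v_{i-1}v_i,\epsilon_i)$, and suppose, for a contradiction, that \emph{every} arc $a_i$ of $C$ belongs to a positive cycle $D_i$. Writing $D_i=a_i\cup Q_i$ with $Q_i$ a path from $v_i$ to $v_{i-1}$ and $P_i:=C[v_i,v_{i-1}]$, uniqueness of $C$ forces $D_i$ to be positive and $C=a_i\cup P_i$ to be negative, so $\mathrm{sign}(Q_i)=\mathrm{sign}(a_i)$ while $\mathrm{sign}(P_i)=-\mathrm{sign}(a_i)$; in particular $Q_i\neq P_i$.

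The first ingredient is a sign-rigidity fact proved exactly like the observation preceding the lemma: \emph{in a signed digraph without negative cycle, any two directed paths with the same endpoints have the same sign.} Indeed, if $R_1,R_2$ go from $s$ to $t$ and $T$ goes from $t$ to $s$, then the closed walks $(R_1,T)$ and $(R_2,T)$ would produce a negative cycle in their support if negative, so both are positive, whence $\mathrm{sign}(R_1)=\mathrm{sign}(T)=\mathrm{sign}(R_2)$. Applying this inside $G\setminus a_i$ (which has no negative cycle, $C$ being the unique one and using $a_i$): if there were a path from $v_{i-1}$ to $v_i$ in $G\setminus a_i$, then all paths from $v_i$ to $v_{i-1}$ would share the sign of $P_i=-\mathrm{sign}(a_i)$, contradicting $\mathrm{sign}(Q_i)=\mathrm{sign}(a_i)$. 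Hence, for every $i$, \emph{$G\setminus a_i$ has no directed path from $v_{i-1}$ to $v_i$}: each arc of $C$ is a ``strong bridge''.

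The engine for producing the contradiction is the closed walk
\[
W=(Q_k,Q_{k-1},\dots,Q_1),
\]
which runs $v_0\xrightarrow{Q_k}v_{k-1}\xrightarrow{Q_{k-1}}\cdots\xrightarrow{Q_1}v_0$ backward around $C$ through the detours. Its sign is $\prod_i\mathrm{sign}(Q_i)=\prod_i\mathrm{sign}(a_i)=\mathrm{sign}(C)=-1$, so by the observation that a negative closed walk contains a negative cycle in its support, $Q_1\cup\cdots\cup Q_k$ contains a negative cycle $C'$. If $C'\neq C$ we are done. The whole difficulty is therefore to exclude the case $C\subseteq Q_1\cup\cdots\cup Q_k$, i.e.\ that the support of $W$ happens to contain every arc of $C$.

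This is the step I expect to be the main obstacle, and where the strong-bridge rigidity above must be exploited. Note first that $a_i\notin Q_i$ (a simple $v_i\to v_{i-1}$ path cannot use $v_{i-1}\to v_i$), so if $C\subseteq\bigcup_iQ_i$ then each $a_i$ sits on some $Q_j$ with $j\neq i$, describing how a detour re-routes part of $C$. The small cases (e.g.\ a triangle $C$ with each $a_i$ paralleled by an off-cycle path) show the expected resolution: each such re-routing segment is a ``chord'' whose sign is opposite to the portion of $C$ it by-passes, so combining an \emph{even} number of these chords with the remaining arcs of $C$ yields a directed cycle of the same sign as $C$ but missing at least one arc of $C$, hence a negative cycle distinct from $C$. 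Making this precise in general — either by choosing the $Q_i$ minimally so that $\bigcup_iQ_i\not\supseteq C$, or by extracting such a doubly-rerouted cycle from the configuration $C,Q_1,\dots,Q_k$ using an extremal (e.g.\ maximal shared arc) choice of positive cycle — is the crux of the argument; once it is done, the resulting second negative cycle contradicts the hypothesis and the lemma follows. The ordinary tools for balance (Theorem~\ref{thm_Harary} and Lemma~\ref{lem_GG*}) do not apply directly here, precisely because only directed negative cycles are controlled, which is why the directed, sign-flipping chord analysis above seems unavoidable.
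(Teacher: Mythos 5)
Your setup is sound and runs parallel to the paper's own strategy: assume every arc $a_i$ of the unique negative cycle $C$ lies on a positive cycle, compute signs of the detour paths $Q_i$, form the negative closed walk $W=(Q_k,\dots,Q_1)$, and invoke the fact that a negative closed walk contains a negative cycle in its support. The sign-rigidity (``strong bridge'') observation is also correct. But the proof is not complete, and you say so yourself: everything hinges on excluding the case $C\subseteq Q_1\cup\cdots\cup Q_k$, i.e.\ on producing a negative closed walk whose support \emph{misses at least one arc of $C$}, and for that step you offer only a heuristic (``choose the $Q_i$ minimally'', ``an extremal choice'') plus an appeal to small cases. That is precisely the hard part, and it is where essentially all of the paper's work goes. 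As written, the argument proves only that \emph{some} negative cycle lies in $\bigcup_i Q_i$ (which uniqueness makes trivially true, since $C$ itself may lie there), so no contradiction has actually been derived.

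For comparison, the paper resolves this difficulty as follows: it works with \emph{alternative paths} (paths between two vertices of $C$ that are arc-disjoint from the corresponding portion of $C$, extracted from the assumed positive cycles), proves that any alternative path from $u$ to $v$ has the same sign as $C[v,u]$, and then makes an extremal choice: take $P_1$ from $u_1$ to $v_1$ maximizing the length of $C[v_1,u_1]$, and a second alternative path $P_2$ from $u_2$ to $v_2$ spanning the arc of $C$ entering $v_1$. Maximality forces $C[u_2,v_2]\subseteq C[v_1,u_1]$. The walk on $H=P_1\cup C[v_1,u_2]\cup P_2\cup C[v_2,u_1]$ is computed to have sign $s(C)$, so $H$ contains a negative cycle, which by uniqueness must be $C$; this forces $C[u_2,v_2]\subseteq P_1$, and then truncating $P_1$ at its first vertex on $C[v_1,u_2]$ yields a second negative closed walk $W'$ whose support is arc-disjoint from $C[u_2,v_2]$ — hence a negative cycle distinct from $C$, the desired contradiction. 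Note the two-stage structure: the first negative walk is \emph{not} the contradiction (it may just recover $C$, exactly the failure mode you ran into); it is used as a lever to locate $C[u_2,v_2]$ inside $P_1$, and only the second, truncated walk gives the contradiction. Your proposal would need this (or some genuine substitute) to be a proof; also note that your $Q_i$ are only required to avoid $a_i$, so they can share many arcs with $C$, which makes the missing step harder in your formulation than in the paper's, where alternative paths are internally disjoint from $C$.
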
 
%%%%%%%%%%%%%%%%%%

\begin{proof}[{\bf Proof}]
Let $C$ be the unique negative cycle of $G$ and suppose, for a contradiction, that every arc of $C$ belongs to a positive cycle. A path $P$ of $G$ from $u$ to $v$ is said to be an {\em alternative path} if $u$ and $v$ are distinct vertices of $C$ and if $P$ and $C[u,v]$ are arc-disjoint. 

\medskip
We first prove that {\em for every arc $a=(uv,\e)$ of $C$, there exists an alternative path $P$ from $u'$ to $v'$ such that $a$ is an arc of $C[v',u']$}. Indeed, by hypothesis, $a$ is contained in at least one positive cycle $F$. Let $a'$ be the first arc of $F[v,u]$ that is not in $C$, let $u'$ be the initial vertex of $a'$, and let $v'$ be the first vertex of $F[u',u]$ distinct from $u'$ that belongs to $C$. Then $F[u',v']$ is an alternative path, and since $C[u',v']\subseteq C[v,u]$, $a$ is an arc of $C[v',u']$. 

\medskip
Furthermore, 
\begin{equation}\label{eq:sign}
{\text{\em Each alternative path $P$ from $u$ to $v$ has the same sign as $C[v,u]$.}}
\end{equation}
Indeed, if $P$ and $C[v,u]$ have not the same sign, then $(P,C[v,u])$ is a negative closed walk of $H=P\cup C[v,u]$. Thus $H$ has a negative cycle $F$, and since $P$ and $C[u,v]$ are arc-disjoint, $F\neq C$, a contradiction. 

\medskip

Let $P_1$ be an alternative path from $u_1$ to $v_1$ that maximizes the length of $C[v_1,u_1]$. Let $a$ the arc of $C$ with terminal vertex $v_1$, and let $P_2$ be an alternative path from $u_2$ to $v_2$ such that $a$ is an arc of $C[v_2,u_2]$ (we have proved that such an alternative path exists).  Let us prove that 
\[
C[u_2,v_2]\subseteq C[v_1,u_1].
\]
Indeed, $u_2$ is a vertex of $C[v_1,u_1]$ since otherwise $C[v_1,u_1]$ is a strict subgraph of $C[v_2,u_2]$, a contradiction with our assumption on~$P_1$. Hence, since $v_2$ is a vertex of $C[u_2,v_1]\setminus\{v_1\}$, $v_2$ is either a vertex of $C[u_2,u_1]$ or a vertex of $C[u_1,v_1]\setminus \{u_1,v_1\}$. Suppose that $v_2$ is a vertex of $C[u_1,v_1]\setminus \{u_1,v_1\}$, and let $H=P_1\cup C[v_1,u_2]\cup P_2$. Obviously, $H$ contains a path $P$ from $u_1$ to $v_2$. Since $C[u_1,v_2]\subseteq C[u_1,v_1]$ and $C[u_1,v_2]\subseteq C[u_2,v_2]$, $P_1$ and $P_2$ are arcs-disjoint from $C[u_1,v_2]$, and since $C[v_1,u_2]\subseteq C[v_2,u_1]$, we deduce that $H$ is arc-disjoint from $C[u_1,v_2]$. Thus $P$ is an alternative path from $u_1$ to $v_2$, and since $C[v_1,u_1]$ is a strict subgraph of $C[v_2,u_1]$, this contradicts our assumption on $P_1$. Consequently, $v_2$ is a vertex of $C[u_2,u_1]$, and thus $C[u_2,v_2]\subseteq C[v_1,u_1]$.

\medskip
We are now in position to prove the lemma. Consider the subgraph
\[
H=P_1\cup C[v_1,u_2]\cup P_2\cup C[v_2,u_1]. 
\]
It contains a closed walk $W$ of sign 
\[
s(W) =s(P_1)s(C[v_1,u_2])s(P_2)s(C[v_2,u_1])
\]
and by (\ref{eq:sign}) we obtain
\[
s(W)=s(C[v_1,u_1])s(C[v_1,u_2])s(C[v_2,u_2])s(C[v_2,u_1]).
\]
Since $C[u_2,v_2]\subseteq C[v_1,u_1]$ we have 
\[
s(C[v_1,u_1])=s(C[v_1,u_2])s(C[u_2,v_2])s(C[v_2,u_1])
\]
thus
\[
s(W) =s(C[u_2,v_2])s(C[v_2,u_2])s(C[v_1,u_2])^2s(C[v_2,u_1])^2
=s(C[u_2,v_2])s(C[v_2,u_2])=s(C).
\]
Hence, $W$ is a negative closed walk, thus $H$ has an negative cycle. We deduce that $C\subseteq H$. Since $P_2$ is an alternative path, $C[u_2,v_2]$ is arc-disjoint from $P_2$, and since $C\subseteq H$ we deduce that $C[u_2,v_2]\subseteq P_1$. Let $v'_1$ be the first vertex of $P_1$ that belongs to $C[v_1,u_2]$, and let $P'_1$ be the path from $u_1$ to $v'_1$ contained in $P_1$. Clearly, $P'_1$ is an alternative path arc-disjoint from $C[u_2,v_2]$. Now, 
\[
H'= P_1'\cup C[v'_1,u_2]\cup P_2\cup C[v_2,u_1]
\]
is arc-disjointed from $C[u_2,v_2]$, and it contains a closed walk $W'$ of sign 
\[
s(W') =s(P'_1)s(C[v'_1,u_2])s(P_2)s(C[v_2,u_1]).
\]
From (\ref{eq:sign}) we get 
\[
s(W') =s(C[v'_1,u_1])s(C[v'_1,u_2])s(C[v_2,u_2])s(C[v_2,u_1]).
\]
Since $C[u_2,v_2]\subseteq C[v'_1,u_1]$ we have 
\[
s(C[v'_1,u_1])=s(C[v'_1,u_2])s(C[u_2,v_2])s(C[v_2,u_1])
\]
thus
\[
s(W') =s(C[u_2,v_2])s(C[v_2,u_2])s(C[v'_1,u_2])^2s(C[v_2,u_1])^2=s(C).
\]
Hence, $W'$ is an negative closed walk, thus $H'$ has a negative cycle $F$. Since $H'$ is arc-disjoint from $C[u_2,v_2]$, $F\neq C$, a contradiction.
\end{proof}

\begin{proof}[{\bf Proof of Theorem \ref{strong}}]
Let $C$ be the unique negative cycle of $G$. Let $A$ be the set of arcs of $C$ that belong to no positive cycle, which is not empty by Lemma~\ref{graph}. Suppose that $a=(uv,\e)\in A$, and let $a'$ be the arc succeeding $a$ in $C$. If $v$ is of in-degree one, then every cycle containing $a'$ contains also $a$ and thus $a'\in A$. Since some vertex of $C$ is of in-degree at least two, we deduce that there exists an arc in $A$, say $a=(uv,\e)$,  such that $v$ is of in-degree at least two. 

\medskip
Let $I_1,\dots,I_k$ be an enumeration of the components of $G\setminus a$ in the topological order (that is, in such a way that for all $1\leq i<j\leq k$, $G\setminus a$ has no arc from $I_j$ to $I_i$). Since $G$ is strong, $I_1$ (resp. $I_k$) is the unique initial (resp. terminal) component of $G\setminus a$, and $v\in I_1$ (resp. $u\in I_k$). Suppose that there exists two distinct arcs leaving a component $I_p$, $1\leq p<k$. Then $G\setminus a$ has two distinct paths from $v$ to $u$, and this contradict that fact that $C$ is the unique cycle containing~$a$. We deduce that, for each $1\leq p<k$, there is a unique arc, say $a_p=(u_pv_p,\e_p)$, leaving the component $I_p$. Then $v_p\in I_{p+1}$, since otherwise there is an initial component distinct from $I_1$. Since every component $I_p$ has no negative cycle, we deduce that $(G\setminus a)^*$ has no negative cycle, and so $G\setminus a$ has a two-coloring $x\in\B^n$. 

\medskip
Since $v$ is of in-degree at least two in $G$, $G\setminus a$ has no source, and thus we deduce from Lemma~\ref{lem_ex_1} that $f_w(x)=x_w$ for all $w\neq v$. From Lemma~\ref{lem_ex_2}, if $f_v(x)\neq x_v$ then $f$ canalizes $a$, a contradiction. Thus $f(x)=x$ and we prove with similar arguments that $f(\overline{x})=\overline{x}$. 
\end{proof}

%%%%%%%%%%%%%%%%%%%%%%%%%%%%%%%%%%%%%%%%%%%%%%%%%%%%%%%%%%%%%%%%%%%%
\section{Concluding remarks}\label{sec:conclusion}
%%%%%%%%%%%%%%%%%%%%%%%%%%%%%%%%%%%%%%%%%%%%%%%%%%%%%%%%%%%%%%%%%%%%

We have established new sufficient conditions, expressed on $G$ only, for the uniqueness (resp. existence) of a fixed point of $f$. In order to know if it is reasonable to think about a characterization, it could be interesting to study the complexity of the following decision problem: {\em Given an signed digraph $G$, is it true that all the Boolean networks with $G$ as interaction graph have at most (resp. at least) one fixed point?}

\medskip
We have established an upper-bound on the number of fixed points, namely $2^{\tilde\tau^+}$, that improve the classic bound $2^{\tau^+}$. This new bound raise some questions. Let $\max(G)$ be the maximal number of fixed points in a Boolean network with $G$ as interaction graph. The signed digraph in Example \ref{ex1} shows that $\max(G)$ does not necessarily increase with $\tau^+$. We can ask if the situation is identical with $\tilde\tau^+$: {\em Does $\max(G)$ necessarily increase with $\tilde\tau^+$? In other words, is there exists an unbounded function $h$, independent of $G$, such that $h(\tilde\tau^+)\leq \max(G)$?} 

\medskip
Let $\Gamma(f)$ be the digraph with vertex set $\B^n$ and with an arc from $x$ to $y$ if there exists $v\in[n]$ such that $y=\overline{x}^v$ and $f_v(x)\neq x_v$. This digraph $\Gamma(f)$ is usually called the {\em asynchronous state graph} of $f$, and is a classical model for the dynamic of gene network \cite{TA90,TK01}. The terminal components of $\Gamma(f)$ are then regarded as the attractors of the system and called {\em attractors} of $\Gamma(f)$. Hence, fixed points are attractors of cardinality one. Attractors of size at least two are called {\em cyclic attractors}. It is known that the upper-bounds $2^{\tau^+(G)}$ and $A(n,g^+(G))$ are actually upper-bounds one the number of attractors in $\Gamma(f)$ \cite{R09}. {\em Is it also the case with the new bounds $2^{\tilde\tau^+(G)}$ and $A(n,\tilde g^+(G))$?} Besides, it is known that if $G$ has no negative cycle, then $\Gamma(f)$ has no cyclic attractor (and this trivially implies that $f$ has a fixed point) \cite{R10}. {\em Is the weaker condition in Theorem~\ref{newrule2} also sufficient for the absence of cyclic attractors in $\Gamma(f)$?} 
%We can also think about generalization of the results to the non-Boolean discrete case. 

\medskip
In the proof of Theorem~\ref{strong}, we show that a signed digraph with a unique negative cycle is very-well structured and, in particular, is not $2$-arc-strongly connected. Continuing this direction, it could be interesting to study the structure of signed digraphs in which all the negative cycles are vertex disjoint, in order to obtain a sufficient condition for the presence of a fixed point with the spirit of the condition in Theorem~\ref{renewrule1}. 

\paragraph{Acknowledgment} I wish to thank Emmanuelle Seguin and Isabelle Soubeyran for stimulating discussions. 

\bibliographystyle{plain}
\bibliography{bib}

\end{document}